\newtheorem{theorem}{Theorem}
\newtheorem{corollary}[theorem]{Corollary}
\newtheorem{definition}[theorem]{Definition}
\newtheorem{lemma}[theorem]{Lemma}
\newtheorem{proposition}[theorem]{Proposition}
\newtheorem{remark}[theorem]{Remark}
\newenvironment{proof}[1][Proof]{\noindent\textbf{#1.} }{\ \rule{0.5em}{0.5em}}
\newcommand{\newrj}[1]{{\color{blue} #1}}
\newcommand{\Var}{\ensuremath{\operatorname{Var}}}
\newcommand{\OnlineETI}{{\ensuremath{\mathsf{OnlineETI}}}\xspace}
\title{Adaptive Experimental Design with Temporal\\ Interference: A Maximum Likelihood Approach}
\author{%
  Peter Glynn, Ramesh Johari, Mohammad Rasouli\\
   Department of Management Science and Engineering,\\
  Stanford University, Stanford, CA 94305 \\
  \texttt{glynn, rjohari, rasoulim  @stanford.edu} \\
  % examples of more authors
  % \And
  % Coauthor \\
  % Affiliation \\
  % Address \\
  % \texttt{email} \\
  % \AND
  % Coauthor \\
  % Affiliation \\
  % Address \\
  % \texttt{email} \\
  % \And
  % Coauthor \\
  % Affiliation \\
  % Address \\
  % \texttt{email} \\
  % \And
  % Coauthor \\
  % Affiliation \\
  % Address \\
  % \texttt{email} \\
}
\begin{document}

\maketitle

\begin{abstract}
Suppose an online platform wants to compare a treatment and control policy, e.g., two different matching algorithms in a ridesharing system, or two different inventory management algorithms in an online retail site.  Standard randomized controlled trials are typically not feasible, since the goal is to estimate policy performance on the entire system.  Instead, the typical current practice involves dynamically alternating between the two policies for fixed lengths of time, and comparing the average performance of each over the intervals in which they were run as an estimate of the treatment effect.  However, this approach suffers from {\em temporal interference}: one algorithm alters the state of the system as seen by the second algorithm, biasing estimates of the treatment effect.  Further, the simple non-adaptive nature of such designs implies they are not sample efficient.

We develop a benchmark theoretical model in which to study optimal experimental design for this setting.  We view testing the two policies as the problem of estimating the steady state difference in reward between two unknown Markov chains (i.e., policies).  We assume estimation of the steady state reward for each chain proceeds via nonparametric maximum likelihood, and search for consistent (i.e., asymptotically unbiased) experimental designs that are efficient (i.e., asymptotically minimum variance).  Characterizing such designs is equivalent to a Markov decision problem with a minimum variance objective; such problems generally do not admit tractable solutions.  Remarkably, in our setting, using a novel application of classical martingale analysis of Markov chains via Poisson's equation, we characterize efficient designs via a succinct convex optimization problem.  We use this characterization to propose a consistent, efficient online experimental design that adaptively samples the two Markov chains.

%A/B testing is studied for causal inference in dynamic systems where the treatment and control are in the form of two different policies, and treatment effect is the difference in their time average outcomes. Temporal interference exists due to the effect of past treatments on current outcome through system state trajectory. We cast the problem into control of a Markov chain for testing two policies. For non-parametric maximum likelihood estimator,  we identify a consistent sampling policy with minimum asymptotic variance, subject to regularity conditions, as a solution to a convex optimization problem. The proposed sampling policy is stationary Markov and random. A similarly efficient online experiment without access to the Markov chain primitives is proposed.
\end{abstract}

\section{Introduction}
\label{sec:intro}

Suppose an online platform wants to compare a treatment and control policy, e.g., two different matching algorithms in a ridesharing system, or two different inventory management algorithms in an online retail site.  Standard randomized controlled trials are typically not feasible, since the goal is to estimate policy performance on the entire system.  Instead, the typical current practice involves dynamically alternating between the two policies for fixed lengths of time, and comparing the average performance of each over the intervals in which they were run as an estimate of the treatment effect; this is referred to as a {\em switchback} experimental design  \cite{brandt1938tests, lucas1956switchback}.

However, switchback designs suffer from {\em temporal interference}: the {\em initial condition} in each interval of each policy is determined by the {\em previous} interval of the other policy, and so standard estimation techniques in this setting are {\em biased}. Bias due to temporal interference has been observed in ride-sharing \cite{Lyft}, in delivery services \cite{Doordash}, and ad auctions \cite{basse2016randomization}.  Further, the simple, non-adaptive nature of such designs implies they are not {\em sample efficient}.  Optimal consistent, efficient experimental design in this setting has remained a significant theoretical and practical challenge.

Our paper provides an optimal experimental design within a benchmark theoretical model for settings with temporal interference (Section \ref{sec:model}).  The central challenge posed by temporal interference is the following: we are effectively allowed only one real-world run of the system, with only finitely many observations.  On the other hand, we need to use this single run to estimate performance of {\em both} the systems induced by each of the treatment and control policies.  We model the problem by viewing each policy as its own {\em Markov chain} on a common underlying state space.  The experimental design problem is then to estimate the difference in the steady state reward under the treatment and control Markov chains, using only one run of the system, and without prior knowledge of any of the parameters of either policy or their rewards.

Our key contribution is a  characterization of consistent and asymptotically efficient policies when estimation proceeds via maximum likelihood.  In particular, we model the unknowns nonparametricaly, and use an associated nonparametric maximum likelihood estimator (MLE; Section \ref{sec:mle}).  At any time step, given the current system state, the experimental design chooses which chain to sample.  We restrict attention to policies that satisfy a weak regularity requirement we call {\em time-average regularity} (TAR; Section \ref{sec:tar}).  n particular, with TAR policies, we show the MLE is consistent.

In Section \ref{sec:var}, we present our main result: we characterize efficient TAR policies, i.e., those for which the MLE achieves asymptotically minimum variance among all TAR policies.  
Our approach uses a novel application of classical martingale analysis of Markov chains via Poisson's equation, and leads to a characterization of efficient designs via a succinct convex optimization problem.  This simple characterization is somewhat remarkable: Markov decision problems (MDPs) with variance minimization as the objective have historically not admitted structurally simple solutions (see below).
In Section \ref{sec:online}, we use this characterization to construct an efficient, consistent adaptive online experimental design when estimation proceeds via the MLE.  We conclude in Section \ref{sec:conclusion}.

{\bf Related work}. {\em Interference} occurs in experiments whenever the outcome of a given experimental unit depends on the assignment status of {\em other} experimental units to either treatment or control.  Recent work has devoted extensive attention to interference in experimental design for {\em networks} \cite{athey2018exact, eckles2017design, hudgens2008toward, manski2013identification, sobel2006randomized} and {\em marketplaces} \cite{kohavi2009controlled, ostrovsky2011reserve, wager2019experimenting, johari2020twosided}. 

As our approach involves solving a MDP with unknown primitives, it has some model similarities with reinforcement learning, and particularly work in pure exploration in reinforcement learning \cite{putta2017pure, dann2017unifying}.   The main distinction in our work is that the objective for the MDP we solve is minimum variance of the MLE.

In general, the literature on MDPs with variance minimization as the objective demonstrates the principal of optimality and dynamic programming cannot be used in the classical form for average reward MDPs \cite{sobel1982variance, sobel1994mean, di2012policy, filar1989variance,iancu2015tight}.  In particular, the optimal policy is not necessarily Markov; it can be random; and finding the optimal policy is NP-hard \cite{mannor2011mean, yu2018approximate}.  In contrast to these prior results, our work has thus identified a remarkably tractable MDP with variance minimization as the objective.

% In this paper, we study a distinct form of interference that routinely arises in dynamic environments such as online services: {\em temporal} interference. Temporal interference arises whenever two distinct policies (A and B) are being tested sequentially against each other on the same system. However, each policy alters the {\em dynamics} of the state of the system.  As a consequence,     But these studies do not consider temporal interference. To the best of our knowledge, no one has provided a general treatment to the experiment design under temporal interference. We provide the first such treatment.

% The above problems relates to two strands of literature. First  Particularly, the literature on pure exploration in reinforcement learning targets identifying the optimal policy without penalizing on the costs/loss in the learning phase Our problem does not penalize the cost/loss during the learning phase, however it has a distinct objective for control of the MC which is minimizing the variance of the estimator for the difference of two policies under maximum likelihood estimator. 

% Second our problem involves designing the best adaptive control policy for sampling a Markov chain  for estimating statistics of two separate Markov chains. This is a Markov decision problem with minimum estimation variance objective. 

\section{Preliminaries} 
\label{sec:model}

In this section we introduce the basic formal framework we employ throughout the paper.  We develop the relevant notation to describe two distinct Markov chains on a common state space, as well as the design of adaptive experiments to compare the long-run average reward of these two chains.  

{\bf Notation.} As is common in analysis of finite Markov chains, we view distributions as row vectors and reward vectors as column vectors as appropriate.  In addition, we use ``$\xrightarrow{p}$" to denote convergence in probability, and ``$\Rightarrow$'' to denote weak convergence of random variables.

{\bf Time.}  We assume time is discrete, and indexed by $n = 0, 1, 2, \ldots$.

{\bf State space}.  We assume a finite state space $S$.

{\bf Two Markov chains}.  We wish to compare two different Markov chains indexed by $\ell = 1,2$ evolving on this common state space, defined by transition matrices
\begin{align}
    P(\ell)=(P(\ell,x,y): x,y\in S), \quad \ell=1,2.
\end{align}
We assume both $P(1)$ and $P(2)$ are {\em irreducible}.  

{\bf Auxiliary randomness}.  We require two sources of randomness beyond the Markov chains themselves: one that is used to generate random rewards, and the other that is used to allow experimental designs to be randomized.  Accordingly, we presume the existence of mutually independent sequences of i.i.d.~uniform$[0,1]$ random variables $U_0, U_1, \ldots,$ and $V_0, V_1, \ldots$.

{\bf Sample space and filtration}.  The sample space is $\Omega=(S\times [0,1]^2)^\infty$, with $\omega \in \Omega$ written as $\omega = ( (x_n, u_n, v_n), n \geq 0)$.  For $n\ge 0$, set $X_n(\omega)=x_n$ and $U_n(\omega)=u_n$, $V_n(\omega)=v_n$.  We define the filtration $\mathcal{G}_n=\sigma\big((X_j, U_j, V_j): 0\le j\leq n\big)$ for $n\ge 0$.  We also let $\mathcal{G}_\infty = \sigma\big((X_j, U_j, V_j): j \geq 0\big)$.

{\bf Policies (experiment designs)}.  A sequence of random variables $A=(A_n: n\ge 0)$ is said to be a {\em policy} if $A_n\in \{1,2\}$ for $n\ge 0$, and $A$ is adapted to $(\mathcal{G}_n:n\ge 0)$.  A policy is also an {\em experimental design}: it determines how the experimenter chooses which chain to run at each time step.\footnote{In general, the experimenter may commit in advance to a time horizon $N$ of interest, and the experimenter may use this knowledge in design of their policy.  In what we study here, for ease of presentation, we presume that the policy is defined for all $n \geq 0$.  In the fixed horizon setting, our results can be extended in a straightforward manner to characterize consistency and variance as $N \to \infty$ under appropriate regularity conditions.}

Note that every policy induces a probability measure on $(\Omega, \mathcal{G}_\infty)$; this probability measure has conditional distributions defined as follows, for Borel subsets $A,B \subset [0,1]$ and with Lebesgue measure denoted $\mu$:
\[ P(X_{n+1} = x, U_{n+1} \in A, V_{n+1} \in B | \mathcal{G}_n) = P(A_n, X_n, y) \mu(A) \mu(B). \]

{\bf Rewards}.  When chain $\ell$ is in state $x$ and transitions to state $y$, a random reward is obtained, independent of the past.  Formally, denote the cumulative distribution function of the reward by  $F(\cdot|\ell, x, y)$. Then the reward at time $n\ge 0$ is:
\begin{align}
    R_n=F^{-1}(V_n | A_{n-1}, X_{n-1}, X_n).
\end{align}

For technical simplicity, we assume that the support of $F$ is {\em bounded}, i.e., that rewards are bounded in magnitude.
%We assume that  $E\{R_n|A_{n-1}, X_{n-1}, X_{n}\} < \infty$ and $\Var\{R_n|A_{n-1}, X_{n-1}, X_{n}\} < \infty$ for all $X_{n-1}, X_{n}\in S, A_{n-1}\in \{1,2\}$.

{\bf Stationary distributions}.  Because $S$ is finite and the two matrices $P(1)$ and $P(2)$ are irreducible, there exist unique stationary distributions $\pi(\ell)=(\pi(\ell,x): x\in S), \ell=1,2$ satisfying 
\begin{gather}
    \pi(\ell) =\pi(\ell)P(\ell);\\
    \pi(\ell,x)\ge 0, x\in S;\\
    \sum_{x\in S} \pi(\ell,x)=1.
\end{gather}

{\bf Long-run average reward}.  For $\ell = 1,2$, $x \in S$, define:
\begin{align}
    r(\ell,x)=\mathbb{E}\big\{R_{n+1}|X_n=x, A_n=\ell\big\}=\sum_{y\in S} P(\ell,x,y) \int_{\mathbb{R}} z F(dz|\ell,x,y).
\end{align}
Now define:
\[ \alpha(\ell)=\sum_{x \in S} \pi(\ell,x)r(\ell,x), \ell = 1,2. \]
This is the stationary average reward of chain $\ell$.  By the ergodic theorem for Markov chains, this is also the long-run average reward associated to chain $\ell$.  

{\bf Treatment effect}.  We are interested in the difference in long run average rewards between the two chains, i.e., $\alpha = \alpha(2) - \alpha(1)$.  This is the {\em treatment effect}.

{\bf Estimators}.  An {\em estimator} is a sequence of real-valued random variables $\hat{\alpha}=(\hat{\alpha}_n: n\ge 0)$ that is adapted to $(\mathcal{G}_n:n\ge 0)$.

Our goal is to design a combination of a policy $A$ and an estimator $\hat{\alpha}$ to estimate $\alpha=\alpha(2)-\alpha(1)$ {\em consistently} and {\em efficiently}, in senses that we make precise in the subsequent development.%Our goal is to find a combination of policy $A$ and a regular estimator $\alpha$ that achieves the minimum asymptotic variance for all $P(\ell)$ that are irreducible and $R$ with finite conditional first and second moments.

\section{Maximum Likelihood Estimation}
\label{sec:mle}

In this section, we develop an approach to experiment design and estimation based on a maximum likelihood approach.  Given a policy, we develop the maximum likelihood estimator (MLE) for the treatment effect $\alpha$.  In particular, we take a nonparametric approach in this paper, as we make no parametric assumptions on the Markov chains being studied.  Thus our approach involves maximum likelihood estimation of the transition matrices, followed by inversion to obtain an MLE for the steady state distribution.

%Our main results in this section describe the design of experiments that achieve minimum asymptotic variance among an appropriate class of feasible policies, when the MLE is used.

%\subsection{Maximum Likelihood Estimator}

Let $\Gamma_n(\ell,x)$ to be the number of times action $i$ at state $x$ is sampled by time $n$:
\begin{align}
    \Gamma_n(\ell,x):=\sum_{j=0}^{n-1} I(X_j=x, A_j=\ell),\ x\in S,\ \ell=1,2.
\end{align}

%Set $\Gamma_n:=\min\{\Gamma_n(\ell,x): x\in S, \ell=1,2\}$

%On $\{\Gamma_n\ge 1\}$, set 

Now define:
\begin{align}
\hat{P}_n(\ell,x,y)&=\frac{\sum_{j=0}^{n-1} I(X_j=x, A_j=\ell, X_{j+1}=y)}{\max\{\Gamma_n(\ell,x),1\}};\\
\hat{P}_n(\ell)&=(\hat{P}_n(\ell,x,y):x,y\in S).
\end{align}

The estimators $\hat{P}(1)$ and $\hat{P}(2)$ are standard maximum likelihood estimators (MLE) for the corresponding transition matrices $P(1)$ and $P(2)$.

Define the stopping time $J=\min\{n\ge 0: \hat{P}_n(\ell)$ is irreducible for $\ell=1,2\}$. Note that $\hat{P}_n(\ell)$ will remain irreducible for $n\ge J$, since any path with positive probability under $\hat{P}_J$ will have positive probability under $\hat{P}_n$ for all $n\ge J$.  Thus for each $n\ge J$, $\hat{P}_n(\ell)$ has a unique stationary distribution $\hat{\pi}_n(\ell)$ satisfying 
\begin{gather}
    \hat{\pi}_n(\ell)=\hat{\pi}_n(\ell)\hat{P}_n(\ell);\\
    \hat{\pi}_n(\ell,x)\ge 0, x\in S;\\
    \sum_{x \in S}\hat{\pi}_n(\ell,x)=1.
\end{gather}
Note that by equivariance of the MLE, since stationary distributions are functionals of the transition matrices, each $\hat{\pi}_n(\ell)$ is also the MLE for $\pi(\ell)$.
%The equilibrium distributions $\pi(i)$ are functionals of $P(i)$, so the MLE for these quantities should be the corresponding functional of the MLE of $P(i)$, on the general grounds that $f(\hat{\theta})$ is the MLE for $f(\theta)$ if $\hat{\theta}$ is the MLE for $\theta$.

%On $\{\Gamma_n\ge 1\}$, set 
Define
\begin{align}
\hat{r}_n(\ell,x)&=\frac{\sum_{j=0}^{n-1} I(X_j=x, A_j=\ell)R_{j+1}}{\max\{\Gamma_n(\ell,x), 1\}}. 
\end{align}
The preceding is the MLE of $r(\ell,x)$ along the realized sample path.

Finally, for $n\ge J$, again by equivariance of the MLE, we conclude that the resulting nonparametric MLE $\hat{\alpha}_n$ for $\alpha$ is given by the following:
\begin{align}
    \hat{\alpha}_n=\hat{\pi}_n(2)\hat{r}_n(2)-\hat{\pi}_n(1)\hat{r}_n(1).
\end{align}
For $n< J$, $\ell = 1,2$, and $x \in S$, we arbitrarily define $\pi_n(\ell,x)=1 / |S|$, $\hat{r}_n(\ell,x)=0$, and $\hat{\alpha}_n = 0$.

\section{Time Average Regular (TAR) Policies}
\label{sec:tar}

We specialize our study to the following class of policies, that satisfy a mild regularity condition.  As noted at the end of this subsection in Corollary \ref{cor:consistency}, all TAR policies make $\hat{\alpha}_n$ a consistent estimator of $\alpha$.

\begin{definition}
\label{def:tar}
Policy $A$ is \emph{time-average regular (TAR)} with (possibly random) {\em policy limits} $\gamma=(\gamma(\ell,x): x\in S, \ell=1,2)$ if: 
\begin{align}
    \frac{1}{n} \Gamma_n(\ell,x)\xrightarrow{p} \gamma(\ell,x)
\end{align}
as $n\rightarrow \infty$ for each $x\in S, \ell=1,2$.
\end{definition}

In the sequel we typically require that $\gamma(\ell,x) > 0$ almost surely.  Note that in Definition \ref{def:tar}, in general the policy limits {\em will be dependent on the initial state $X_0$}.  We suppress this dependence in the notation, because this dependence on initial conditions will not play a significant role.  In particular, the policies we suggest for efficient experimentation will lead to deterministic policy limits, with no dependence on the initial state.

We now characterize the structure of policy limits; in particular, we show in Proposition \ref{prop: policy limits} below that policy limits almost surely lie in the set $\mathcal{K}$ defined next.  The proof is in the Appendix.

\begin{definition}\label{def: mathcal K}
Define the set $\mathcal{K}$ as follows:
\begin{align}
    \mathcal{K} & = \Big\{ \kappa = (\kappa(\ell, x) : x \in S, \ell = 1,2)\text{ such that: } \\
& \kappa(1,y)+\kappa(2,y)=\sum_{\ell=1}^{2}\sum_{x\in S} \kappa(\ell,x)P(\ell,x,y), \quad y\in S;\label{eq: kappa eq 1}\\
&    \sum_{\ell=1}^{2}\sum_{x\in S}\kappa(\ell,x)=1;\label{eq: kappa eq 2}\\
&    \kappa(\ell,x)\ge 0, \quad x\in S, \ell=1,2\Big\}.\label{eq: kappa eq 3}
\end{align}
\end{definition}

\begin{proposition}\label{prop: policy limits} 
Let $A$ be a time average regular policy with policy limits $\gamma=(\gamma(\ell,x): x\in S, \ell=1,2)$.  Then almost surely, $\gamma \in \mathcal{K}$.
% almost surely satisfies the following conditions:
% \begin{align}
%     \gamma(1,y)+\gamma(2,y)&=\sum_{\ell=1}^{2} \sum_{x\in S} \gamma(\ell,x)P(\ell,x,y), \qquad y\in S;\label{eq: gamma eq 1}\\
%     \sum_{\ell=1}^2 \sum_{x\in S} \gamma(\ell,x)&=1;\label{eq: gamma eq 2}\\
%     \gamma(\ell,x)&\ge 0, \qquad x\in S, \ell=1,2.\label{eq: gamma eq 3}
% \end{align}
\end{proposition}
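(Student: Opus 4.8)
The plan is to derive the three defining conditions of $\mathcal{K}$ — the balance equation \eqref{eq: kappa eq 1}, the normalization \eqref{eq: kappa eq 2}, and nonnegativity \eqref{eq: kappa eq 3} — as almost sure limits of empirical identities along the sample path, and then invoke the TAR hypothesis to pass to the limit. Nonnegativity is immediate: each $\Gamma_n(\ell,x)$ is nonnegative, so $\Gamma_n(\ell,x)/n \geq 0$, and the limit $\gamma(\ell,x)$ inherits this. Normalization is also essentially a counting identity: for every $n$ we have $\sum_{\ell=1}^2 \sum_{x \in S} \Gamma_n(\ell,x) = n$, since at each time step $j < n$ exactly one pair $(A_j, X_j)$ is incremented. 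Dividing by $n$ and letting $n \to \infty$, using that $S$ is finite so the sum of the limits equals the limit of the sum, gives \eqref{eq: kappa eq 2}.

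The substantive step is the balance equation \eqref{eq: kappa eq 1}. The idea is to count visits to a fixed state $y \in S$ in two ways. On one hand, the number of times the chain is \emph{in} state $y$ at times $1, \ldots, n$ is, up to a bounded boundary correction from time $0$ versus time $n$, equal to $\sum_{\ell=1}^2 \Gamma_n(\ell, y)$ — the left-hand side of \eqref{eq: kappa eq 1} after normalization. On the other hand, the number of times the chain \emph{enters} $y$ equals the number of transitions into $y$, namely $\sum_{\ell=1}^2 \sum_{x \in S} \sum_{j=0}^{n-1} I(X_j = x, A_j = \ell, X_{j+1} = y)$. These two counts differ by at most $1$ (the chain is in $y$ at time $0$ or at time $n$ but one of these need not be a recorded entry). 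So it remains to show that $\frac{1}{n}\sum_{j=0}^{n-1} I(X_j = x, A_j = \ell, X_{j+1} = y) \xrightarrow{p} \gamma(\ell,x) P(\ell,x,y)$. For this I would write $M_n = \sum_{j=0}^{n-1}\big(I(X_j = x, A_j = \ell, X_{j+1} = y) - I(X_j = x, A_j = \ell) P(\ell, x, y)\big)$ and observe that, by the conditional transition law given at the end of Section~\ref{sec:model}, $\mathbb{E}[I(X_{j+1} = y) \mid \mathcal{G}_j] = P(A_j, X_j, y)$, so $M_n$ is a martingale with respect to $(\mathcal{G}_n)$ with bounded increments. Hence $M_n/n \to 0$ almost surely (e.g. by the $L^2$ martingale SLLN, since $\langle M \rangle_n = O(n)$, or by Azuma–Hoeffding). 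Therefore $\frac{1}{n}\sum_{j=0}^{n-1} I(X_j = x, A_j = \ell, X_{j+1} = y) = P(\ell,x,y)\cdot \frac{1}{n}\Gamma_n(\ell,x) + M_n/n \xrightarrow{p} \gamma(\ell,x)P(\ell,x,y)$, using the TAR hypothesis for the first term.

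Combining: divide the two-way count identity for visits to $y$ by $n$, send $n \to \infty$, and use the displayed limits together with finiteness of $S$ to exchange limit and sum. This yields $\gamma(1,y) + \gamma(2,y) = \sum_{\ell=1}^2 \sum_{x \in S} \gamma(\ell,x) P(\ell,x,y)$ almost surely for each $y \in S$, which is \eqref{eq: kappa eq 1}. Since there are finitely many $y$, the identities hold simultaneously almost surely, and together with \eqref{eq: kappa eq 2} and \eqref{eq: kappa eq 3} this gives $\gamma \in \mathcal{K}$ almost surely. The main obstacle is the martingale argument for the transition-count limit; everything else is bookkeeping. One subtlety to handle carefully is that TAR only gives convergence in probability, not almost surely, so I would either (i) carry out the whole argument in probability — each empirical identity is an exact equality up to an $O(1/n)$ term, and convergence in probability is preserved under the finite sums and the addition of the $o_p(1)$ martingale term — and conclude $\gamma \in \mathcal{K}$ in probability, hence almost surely since $\gamma$ is a fixed random vector and $\mathcal{K}$ is closed; or (ii) pass to an almost surely convergent subsequence. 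Either route closes the gap.
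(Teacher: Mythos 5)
Your proposal is correct and follows essentially the same route as the paper's proof: you count visits to a state $y$ via transitions into $y$, split the transition counts into the predictable part $\Gamma_n(\ell,x)P(\ell,x,y)$ plus a martingale-difference remainder, show the remainder is $o_p(1)$ after dividing by $n$ (the paper uses orthogonality plus Chebyshev where you invoke Azuma--Hoeffding or the $L^2$ martingale SLLN), and then pass to the limit using TAR, handling the boundary terms as $O_p(1/n)$. Your closing remark about upgrading the in-probability identity to the almost-sure statement $\gamma\in\mathcal{K}$ (via uniqueness of limits in probability or a subsequence) is the same step the paper performs implicitly when it "takes limits," so nothing is missing.
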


Although straightforward, the preceding proposition encodes a surprising benefit of estimation using both chains.  In particular, experimental designs in this setting can benefit from {\em cooperative exploration}: one chain can be used to drive the system into states for which we want samples for the {\em other} chain (cf.~the relation \eqref{eq: kappa eq 1}).  In Appendix \ref{sec:example_cooperative}, we illustrate that this possibility can yield substantial benefits in estimation variance.  Indeed, as shown in Appendix \ref{sec:example_cooperative}, examples can be constructed such that the variance of the MLE of the treatment effect after $n$ time steps is unboundedly lower for the optimal policy, relative to the variance of the difference in the MLEs of steady state rewards obtained by running each chain in isolation for $n$ time steps.

\begin{remark}
\label{rem:markov_TAR}
Given a TAR policy $A$, for any initial state, the law of the resulting policy limits $\gamma$ is a probability measure over $\mathcal{K}$, according to Proposition \ref{prop: policy limits}.

Conversely, suppose that $\kappa\in \mathcal{K}$ is  positive (i.e. $\kappa(\ell,x)>0$ for $x\in S, \ell=1,2$). We show that regardless of the initial state, $\kappa$ can be achieved as the (deterministic) policy limit of some TAR policy.  For example, define:
\begin{equation}
\label{eq:Markov_from_kappa}
    p(\ell,x)= \frac{\kappa(\ell,x)}{\kappa(1,x)+\kappa(2,x)}
\end{equation}
for $\ell=1,2$ and $x\in S$. Define $A_n$ to be the following {\em stationary Markov policy}:
\begin{align}
    P(A_n=\ell|\mathcal{G}_{n-1}, X_n)=p(\ell,X_n)
\end{align}
for $\ell=1,2$, $n\ge 0$.  This policy is Markov because it depends only on the current state $X_n$ (and the auxiliary randomness $U_n$) and stationary because the choice probabilities do not change with time.  Further, it is straightforward to check that since each $P(\ell)$ is irreducible for $\ell = 1,2$, this policy makes $X_n$ an irreducible Markov chain.  As a result, this chain therefore has a unique stationary distribution regardless of the initial state.  Elementary computation yields that the stationary distribution must be $\pi(x)=\kappa(0,x)+\kappa(1,x), x\in S$. Therefore, the policy limit of this policy is equal to $\kappa$, regardless of the initial state. %why only relative interior of $K$ can be covered with Markov policies? I have adjusted the construction to cover those boundary points. 
\end{remark}

The following proposition implies Corollary \ref{cor:consistency}: the MLE estimator $\hat{\alpha}_n$ is {\em consistent} under all TAR policies with  positive policy limits, i.e., $\hat{\alpha}_n$ converges in probability to $\alpha$.  Both proofs are relatively straightforward and provided in Appendix \ref{sec:tar_proofs}.% \rj{Proofs moved to appendix.}

\begin{proposition}\label{prop: P convergence}
If $A$ is a TAR policy with policy limits that are almost surely  positive, then for $\ell = 1,2$, as $n \to \infty$, there holds
\begin{equation}
\label{eq:P_convergence}
    \hat{P}_n(\ell)\xrightarrow{p}P(\ell)
\end{equation}
and 
\begin{equation}
\label{eq:pi_convergence}
    \hat{\pi}_n(\ell)\xrightarrow{p} \pi(\ell).
\end{equation}
\end{proposition}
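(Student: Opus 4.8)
The plan is to establish the transition-matrix convergence \eqref{eq:P_convergence} by a martingale strong law combined with the TAR hypothesis, and then to deduce the stationary-distribution convergence \eqref{eq:pi_convergence} from it by invoking continuity of the map sending an irreducible transition matrix to its stationary distribution, after checking that the estimated chains $\hat P_n(\ell)$ become irreducible for all large $n$.

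For \eqref{eq:P_convergence}, fix $\ell, x, y$ and write $N_n(\ell,x,y)$ for the numerator in the definition of $\hat P_n(\ell,x,y)$, so $\hat P_n(\ell,x,y) = N_n(\ell,x,y)/\max\{\Gamma_n(\ell,x),1\}$. Since the policy is $\mathcal G_n$-adapted and $X_{n+1}\mid\mathcal G_n$ has law $P(A_n,X_n,\cdot)$, the sums $M_n := \sum_{j=0}^{n-1}\big(I(X_j=x,A_j=\ell,X_{j+1}=y)-I(X_j=x,A_j=\ell)P(\ell,x,y)\big)$ form a $\mathcal G_n$-martingale with increments bounded by $1$; hence $M_n/n\to 0$ a.s.\ by the martingale strong law. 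Rearranging gives $\tfrac1n N_n(\ell,x,y)-P(\ell,x,y)\,\tfrac1n\Gamma_n(\ell,x)\to 0$ a.s., and since $\tfrac1n\Gamma_n(\ell,x)\xrightarrow{p}\gamma(\ell,x)$ by TAR, also $\tfrac1n N_n(\ell,x,y)\xrightarrow{p}P(\ell,x,y)\gamma(\ell,x)$. Dividing numerator and denominator of $\hat P_n(\ell,x,y)$ by $n$ (the $1/n$ in the denominator being asymptotically negligible) and using $\gamma(\ell,x)>0$ a.s.\ together with Slutsky's theorem yields $\hat P_n(\ell,x,y)\xrightarrow{p}P(\ell,x,y)$; finiteness of $S$ then gives \eqref{eq:P_convergence}.

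For \eqref{eq:pi_convergence}, I would first note $J<\infty$ a.s.: for any edge with $P(\ell,x,y)>0$ the events $\{N_n(\ell,x,y)\ge 1\}$ increase in $n$ and have probability tending to $1$ (by $\hat P_n(\ell,x,y)\xrightarrow{p}P(\ell,x,y)>0$), so a.s.\ every such edge is eventually observed; intersecting over the finitely many edges and states, a.s.\ for all large $n$ the support of $\hat P_n(\ell)$ equals that of the irreducible $P(\ell)$, hence $\hat P_n(\ell)$ is irreducible. Next, on a sufficiently small neighborhood $\mathcal N_\ell$ of $P(\ell)$ in the set of stochastic matrices (small enough that supports can only grow, so all matrices there are irreducible), the stationary distribution is the unique solution of $\nu(I-Q)=0,\ \nu\mathbf 1=1$, which by Cramer's rule is a continuous function $g_\ell$ of $Q$ with $g_\ell(P(\ell))=\pi(\ell)$. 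Since $\hat\pi_n(\ell)=g_\ell(\hat P_n(\ell))$ whenever $n\ge J$ and $\hat P_n(\ell)\in\mathcal N_\ell$, combining continuity of $g_\ell$ at $P(\ell)$, the convergence $\hat P_n(\ell)\xrightarrow{p}P(\ell)$, and $P(n<J)\to 0$ gives $\hat\pi_n(\ell)\xrightarrow{p}\pi(\ell)$.

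The main obstacle is less a hard estimate than careful bookkeeping in the second part: ensuring $\hat\pi_n(\ell)$ is genuinely defined for all large $n$ (so the arbitrary values for $n<J$ are irrelevant to convergence in probability) and invoking continuity of the stationary-distribution map only on the set of irreducible matrices where it holds. The first part is routine once one observes the relevant martingale has bounded increments, so that the strong law applies with no integrability fuss.
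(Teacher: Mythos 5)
Your proof is correct, and while the first half mirrors the paper, the second half takes a genuinely different route. For \eqref{eq:P_convergence} you use the same decomposition as the paper (numerator $=$ martingale $+\ \Gamma_n(\ell,x)P(\ell,x,y)$), differing only in that you kill the martingale term via the martingale strong law with bounded increments, whereas the paper uses orthogonality of martingale differences plus Chebyshev to get convergence in probability; both are fine, and your a.s.\ statement is marginally stronger than needed. For \eqref{eq:pi_convergence} the paper argues by compactness of the simplex and Prohorov's theorem: it extracts weakly convergent subsequences of the laws of $\hat{\pi}_n(\ell)$, passes to the limit in the fixed-point relation $\hat{\pi}_n(\ell)\hat{P}_n(\ell)=\hat{\pi}_n(\ell)$ (valid once $J<\infty$, which it gets from positivity of the policy limits), and invokes uniqueness of $\pi(\ell)$ to identify every subsequential limit. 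You instead show $J<\infty$ a.s.\ directly from \eqref{eq:P_convergence} (each positive edge of $P(\ell)$ is eventually observed, and observed edges are a.s.\ a subset of the true support, so supports eventually coincide and irreducibility holds), and then exploit local continuity of the stationary-distribution map near $P(\ell)$, obtained from Cramer's rule on a nonsingular linear system valid on a neighborhood of irreducible matrices, followed by a continuous-mapping/union-bound step that correctly disposes of the event $\{n<J\}$. Your route is the more direct and standard one (explicit continuity plus continuous mapping), at the cost of setting up the determinant/neighborhood argument; the paper's subsequence argument avoids any explicit continuity claim, using only the defining equation and compactness, and the same compactness template is reused later in the paper (e.g., in the analysis of \OnlineETI), which is presumably why the authors chose it.
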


\begin{corollary}
\label{cor:consistency}
If $A$ is a TAR policy with policy limits that are almost surely positive, then the MLE estimator $\hat{\alpha}_n$ is {\em consistent} under $A$, i.e., $\hat{\alpha}_n \xrightarrow{p} \alpha$ as $n \to \infty$.
\end{corollary}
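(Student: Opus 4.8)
The plan is to deduce the corollary from Proposition~\ref{prop: P convergence} together with a strong law for the reward averages, and then invoke the continuous mapping theorem for convergence in probability. First, under a TAR policy with a.s.\ positive policy limits we have $\Gamma_n(\ell,x)\to\infty$ a.s.\ for every $\ell,x$; a conditional Borel--Cantelli argument then shows that every transition $(x,y)$ with $P(\ell,x,y)>0$ is eventually observed, so the support of $\hat P_n(\ell)$ eventually coincides with that of $P(\ell)$ and hence $\hat P_n(\ell)$ is eventually irreducible, i.e.\ $J<\infty$ a.s. Consequently $\mathbb P(n<J)\to0$, so the arbitrary values assigned to $\hat\pi_n,\hat r_n,\hat\alpha_n$ on $\{n<J\}$ are irrelevant for convergence in probability, and we may argue on $\{n\ge J\}$, where $\hat\alpha_n=\sum_{x}\hat\pi_n(2,x)\hat r_n(2,x)-\sum_{x}\hat\pi_n(1,x)\hat r_n(1,x)$.

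From Proposition~\ref{prop: P convergence} we already have $\hat\pi_n(\ell)\xrightarrow{p}\pi(\ell)$, so the remaining ingredient is $\hat r_n(\ell,x)\xrightarrow{p}r(\ell,x)$ for each $\ell,x$. Fix $\ell,x$ and set
\[ M_n=\sum_{j=0}^{n-1} I(X_j=x,A_j=\ell)\bigl(R_{j+1}-r(\ell,x)\bigr). \]
Using the conditional law specified in Section~\ref{sec:model} (given $\mathcal G_j$, the next state $X_{j+1}\sim P(A_j,X_j,\cdot)$ is independent of the uniform $V_{j+1}$), one checks $\mathbb E[R_{j+1}\mid\mathcal G_j]=r(A_j,X_j)$, so each summand has zero conditional mean given $\mathcal G_j$ and $(M_n)$ is a $(\mathcal G_n)$-martingale. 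Since rewards are bounded, its increments are uniformly bounded, hence $\sum_{j\ge1} j^{-2}\,\mathbb E[(M_j-M_{j-1})^2\mid\mathcal G_{j-1}]<\infty$ a.s., and the martingale strong law gives $M_n/n\to 0$ a.s. Writing
\[ \hat r_n(\ell,x)-r(\ell,x)=\frac{M_n}{\max\{\Gamma_n(\ell,x),1\}}=\frac{M_n/n}{\max\{\Gamma_n(\ell,x),1\}/n}, \]
the numerator tends to $0$ a.s.\ and the denominator converges in probability to $\gamma(\ell,x)>0$ by the TAR property, so the ratio tends to $0$ in probability, i.e.\ $\hat r_n(\ell,x)\xrightarrow{p}r(\ell,x)$.

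Finally, $\hat\alpha_n$ is a fixed polynomial (bilinear in the pairs $(\hat\pi_n(\ell,\cdot),\hat r_n(\ell,\cdot))$) of the finitely many quantities $\hat\pi_n(\ell,x),\hat r_n(\ell,x)$, hence a continuous function of them; applying the continuous mapping theorem for convergence in probability to the joint convergence established above yields $\hat\alpha_n\xrightarrow{p}\sum_x\pi(2,x)r(2,x)-\sum_x\pi(1,x)r(1,x)=\alpha(2)-\alpha(1)=\alpha$. The only genuinely delicate points are bookkeeping: verifying $\mathbb E[R_{j+1}\mid\mathcal G_j]=r(A_j,X_j)$ from the construction of the rewards (so that $M_n$ really is a martingale), and correctly combining the a.s.\ convergence $M_n/n\to0$ with the merely in-probability convergence $\Gamma_n(\ell,x)/n\xrightarrow{p}\gamma(\ell,x)$ furnished by the TAR definition --- most cleanly handled via the subsequence characterization of convergence in probability. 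Everything else is immediate from Proposition~\ref{prop: P convergence} and the continuous mapping theorem.
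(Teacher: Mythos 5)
Your proof is correct and follows essentially the same route as the paper: the paper's proof of Corollary~\ref{cor:consistency} simply asserts $\hat{r}_n(\ell,x)\xrightarrow{p}r(\ell,x)$ as straightforward and then invokes Proposition~\ref{prop: P convergence}, which is exactly your structure, with the reward convergence filled in by the same kind of martingale-difference argument the paper uses in its proofs of Propositions~\ref{prop: policy limits} and~\ref{prop: P convergence} (your use of the martingale strong law in place of the orthogonality-plus-Chebyshev second-moment bound is only a cosmetic variation). Your bookkeeping on $J<\infty$ a.s.\ and on the $\Gamma_n(\ell,x)=0$ edge case is sound and consistent with the paper's treatment.
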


\section{The MLE with TAR Policies: A Characterization of Efficiency}
\label{sec:var}

In this section, we study the asymptotic variance of the MLE when TAR policies are used to sample and compare the two Markov chains in the experiment.   In Section \ref{ssec:CLT}, we develop a central limit theorem for the MLE estimator when used with TAR policies.  In Section \ref{ssec:optimality}, we use this central limit theorem to give a characterization of policies that are efficient, in the sense that they provide minimum asymptotic variance.

\subsection{A Central Limit Theorem}
\label{ssec:CLT}

A key tool in our analysis is {\em Poisson's equation} from the theory of Markov chains.  Let $P$ be the transition matrix of an irreducible Markov chain on $S$, with corresponding stationary distribution $\pi$; let $\Pi$ be matrix with rows equal to $\pi$, i.e., $\Pi = e \pi$ where $e= (1, \ldots, 1)$.  Further let $r$ be a reward function on $S$; we center $r$ by defining $\tilde{r} = r - e \pi r$.  Recall that Poisson's equation for $\tilde{r}$ under $P$ is:
\begin{equation}
\label{eq:poisson}
(I - P) g = \tilde{r}.
\end{equation}
One solution to the previous equation is given by:
\begin{equation}
\label{eq:PE_solution}
\tilde{g} = (I - P + \Pi)^{-1} r,
\end{equation}
where $(I - P + \Pi)^{-1}$ is the {\em fundamental matrix} associated to $P$.  (In general, the solution to Poisson's equation is not unique; however, the preceding solution is the unique one for which $\pi \tilde{g} = \pi r$.)  The following result is a well-known central limit theorem for finite Markov chains (see, e.g., \cite{asmussen2003applied}, Theorem 7.2).  

\begin{proposition}
\label{prop:poisson_CLT}
As $n \to \infty$, the random variable $\frac{1}{\sqrt{n}} \left( r(X_0) + \cdots + r(X_{n-1}) - n \pi r \right)$
converges weakly to a normal random variable with mean zero and variance $\sigma^2(r) = \pi \tilde{g}^2 - \pi (P\tilde{g})^2$, where $\tilde{g}$ is the solution to Poisson's equation in \eqref{eq:PE_solution}.\footnote{Here we use the notation $f^2$ to denote elementwise squaring of the function, i.e., $f^2(x) = f(x)^2$.}
\end{proposition}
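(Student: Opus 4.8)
The plan is to realize the partial sum $r(X_0) + \cdots + r(X_{n-1}) - n\pi r$ as a martingale plus a negligible boundary term, and then invoke the martingale central limit theorem. First I would let $\tilde{g}$ be the solution \eqref{eq:PE_solution} to Poisson's equation $(I-P)g = \tilde{r}$, where $\tilde{r} = r - e\pi r$. The key algebraic identity is the telescoping decomposition: writing $\tilde{r}(X_k) = \tilde{g}(X_k) - (P\tilde{g})(X_k)$ and $M_n = \sum_{k=0}^{n-1}\big(\tilde{g}(X_{k+1}) - (P\tilde{g})(X_k)\big)$, one checks that
\begin{align}
\sum_{k=0}^{n-1} \tilde{r}(X_k) = M_n + (P\tilde{g})(X_0) - (P\tilde{g})(X_n).
\end{align}
Since $\mathbb{E}[\tilde{g}(X_{k+1}) \mid \mathcal{F}_k] = (P\tilde{g})(X_k)$, the increments $D_k := \tilde{g}(X_{k+1}) - (P\tilde{g})(X_k)$ form a martingale difference sequence with respect to the natural filtration of the chain, so $M_n$ is a martingale. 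The boundary term $(P\tilde{g})(X_0) - (P\tilde{g})(X_n)$ is bounded (since $S$ is finite and $r$ hence $\tilde{g}$ is bounded), so after dividing by $\sqrt{n}$ it vanishes in probability and does not affect the limiting distribution.

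Next I would apply the martingale CLT (e.g., the version for stationary ergodic martingale differences, or Theorem 7.2-type arguments in \cite{asmussen2003applied}): it suffices to verify that $\frac{1}{n}\sum_{k=0}^{n-1} \mathbb{E}[D_k^2 \mid \mathcal{F}_k] \xrightarrow{p} \sigma^2$ for a constant $\sigma^2$, together with a Lindeberg/conditional-variance-boundedness condition, which is immediate here because the $D_k$ are uniformly bounded. Under the stationary version of the chain (or after the chain mixes, which does not change the limit), $\mathbb{E}[D_k^2 \mid \mathcal{F}_k] = (P\tilde{g}^2)(X_k) - ((P\tilde{g})(X_k))^2$, and by the ergodic theorem for the irreducible finite chain this time-average converges to $\pi(P\tilde{g}^2) - \pi(P\tilde{g})^2$. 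Finally I would simplify: since $\pi P = \pi$, we have $\pi(P\tilde{g}^2) = \pi \tilde{g}^2$, giving $\sigma^2 = \pi\tilde{g}^2 - \pi(P\tilde{g})^2$, which is the stated $\sigma^2(r)$. A remark is warranted that although $r$ rather than $\tilde{r}$ appears in the formula \eqref{eq:PE_solution}, the difference $r - \tilde{r}$ is the constant $\pi r$, which is annihilated by $I - P$ and shifts $\tilde{g}$ by a constant, leaving $\sigma^2(r)$ unchanged; this is consistent with the normalization $\pi\tilde{g} = \pi r$.

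The main obstacle I anticipate is not any single step but rather handling the initial condition cleanly: the proposition is stated for an arbitrary (possibly deterministic) starting state $X_0$, whereas the cleanest martingale-CLT statements assume stationarity. I would address this either by a coupling argument (couple the chain started from $X_0$ to a stationary copy, with coupling time having exponential tails by irreducibility on a finite state space, so the two partial sums differ by an $O_p(1)$ quantity) or by directly checking that the conditional-variance convergence $\frac{1}{n}\sum_k \mathbb{E}[D_k^2\mid\mathcal{F}_k] \to \sigma^2$ holds almost surely for every starting state via the ergodic theorem, which holds regardless of initialization for an irreducible finite chain. Either route removes the stationarity assumption at negligible cost. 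Since this is a well-known result (the reference \cite{asmussen2003applied} is cited), the write-up can be brief, essentially exhibiting the martingale decomposition and quoting the martingale CLT.
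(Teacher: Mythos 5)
Your proposal is correct and follows the standard route: the paper itself gives no proof of this proposition (it is quoted as a known result from \cite{asmussen2003applied}), and your martingale-difference decomposition via Poisson's equation plus the martingale CLT is exactly the classical argument, and indeed the same machinery the paper later deploys in its own proof of Theorem \ref{thm:CLT}. One immaterial slip: with $M_n=\sum_{k=0}^{n-1}\bigl(\tilde{g}(X_{k+1})-(P\tilde{g})(X_k)\bigr)$ the telescoped boundary term is $\tilde{g}(X_0)-\tilde{g}(X_n)$ rather than $(P\tilde{g})(X_0)-(P\tilde{g})(X_n)$; either way it is bounded on the finite state space and vanishes after dividing by $\sqrt{n}$, and your handling of the arbitrary initial state via the ergodic theorem (which holds for any initialization of an irreducible finite chain) is sound.
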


Our goal is to obtain a central limit theorem for TAR policies.  Note that there are several complexities in our setting that make this challenging: in general TAR policies allow for adaptive sampling, i.e., the chain chosen by the policy at a given time step can depend on the past history.  In particular, the induced state process may no longer be Markovian as a result.  To further complicate matters, the MLE $\hat{\alpha}_n$ cannot be represented simply as an average sum of rewards over the first $n$ time periods. 

Nevertheless, we now present a central limit theorem result analogous to Proposition \ref{prop:poisson_CLT} for MLE estimation with TAR policies. 
% \textcolor{red}{MR: should we clarify the difference? we want similar result to Proposition \ref{prop:poisson_CLT} but for TAR adaptive sampling?}
For $\ell = 1,2$, we define $\tilde{g}(\ell)$ to be the solution to \eqref{eq:PE_solution} for the transition matrix $P(\ell)$ with reward function $r(\ell)$, i.e.:
\[ \tilde{g}(\ell) = (I - P(\ell) + \Pi(\ell))^{-1} r(\ell). \]
In addition, define:
\begin{align}
\sigma^2(\ell,x) &= \Var\bigg(\tilde{g}(\ell,X_1)+R_1\big|X_{0}=x, A_{0}=\ell\bigg)\\
    &=\sum_{y\in S} P(\ell, x, y) [\tilde{g}(\ell,y)-\sum_{z\in S} P(\ell,x,z)\tilde{g}(\ell,z)]^2\\
    &\quad +\sum_{y\in S} P(\ell,x,y)\Var(R_1|X_0=x, X_1=y, A_0=\ell). \label{eq:sigma}
\end{align}

We have the following theorem.

\begin{theorem}\label{thm:CLT}
Suppose that $A$ is a TAR policy, with policy limits $\gamma=(\gamma(\ell,x): \ell=1,2,x\in S)$ that are almost surely  positive.  Let $G=\big(G(\ell,x): x\in S, \ell=1,2\big)$ be a family of independent Gaussian random variables with mean $0$ and unit variance.  Then for the MLE $\hat{\alpha}_n$, there holds
\begin{align}
\label{eq:CLT}
    n^{1/2}(\hat{\alpha}_n-\alpha)\Rightarrow \sum_{x\in S} \frac{\pi(1,x)\sigma(1,x)}{\gamma(1,x)^{1/2}} G(1,x)-\sum_{x\in S}\frac{\pi(2,x)\sigma(2,x)}{\gamma(2,x)^{1/2}}G(2,x)
\end{align}
as $n\rightarrow \infty$, where $G$ is independent of $\gamma$.%should we restrict to the case where \gamma(\ell,x) is not zero?
\end{theorem}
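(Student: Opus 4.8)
The plan is to decompose the error $\hat\alpha_n - \alpha$ into a sum of four pieces corresponding to the four estimated quantities $\hat\pi_n(1), \hat r_n(1), \hat\pi_n(2), \hat r_n(2)$, linearize each piece around its limit using the delta method, and then identify the martingale structure that drives the fluctuations. Concretely, write $\hat\alpha_n - \alpha = \big(\hat\pi_n(2)\hat r_n(2) - \pi(2) r(2)\big) - \big(\hat\pi_n(1)\hat r_n(1) - \pi(1) r(1)\big)$, and for each $\ell$ expand $\hat\pi_n(\ell)\hat r_n(\ell) - \pi(\ell)r(\ell) \approx (\hat\pi_n(\ell) - \pi(\ell)) r(\ell) + \pi(\ell)(\hat r_n(\ell) - r(\ell))$, the cross term being $o_p(n^{-1/2})$ by consistency (Proposition \ref{prop: P convergence}). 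The first-order term in $\hat\pi_n(\ell) - \pi(\ell)$ is handled by differentiating the map $P \mapsto \pi(P)$; the classical identity is that this derivative, applied to a perturbation $\Delta$ of the transition matrix (with rows summing to zero), is $\pi(\ell)\Delta(I - P(\ell) + \Pi(\ell))^{-1}$. Combining this with the reward term, the key algebraic step is to show that all the first-order dependence on $\hat P_n(\ell) - P(\ell)$ and $\hat r_n(\ell) - r(\ell)$ collapses, up to $o_p(n^{-1/2})$, into a single weighted sum over states $x$ of the "local" centered quantity $\tilde g(\ell, X_{j+1}) + R_{j+1} - \sum_z P(\ell,x,z)\tilde g(\ell,z) - r(\ell,x)$, summed over the visit times $j$ to $(\ell,x)$ — this is exactly where Poisson's equation enters, since $\tilde g(\ell)$ is engineered so that $(I - P(\ell))\tilde g(\ell) = r(\ell) - e\pi(\ell)r(\ell)$, which is what makes the telescoping work.

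Having reduced to such a sum, the next step is the probabilistic core. Fix $(\ell, x)$ and let $\tau_1 < \tau_2 < \cdots$ be the successive times $j$ at which $X_j = x, A_j = \ell$. The increments $D_k := \tilde g(\ell, X_{\tau_k + 1}) + R_{\tau_k+1} - \sum_z P(\ell,x,z)\tilde g(\ell,z) - r(\ell,x)$ are, by the conditional-distribution structure of the induced measure (the display for $P(X_{n+1} = \cdot \mid \mathcal G_n)$ and the reward definition), a martingale difference sequence with respect to the filtration $\mathcal G_{\tau_k + 1}$, with conditional variance exactly $\sigma^2(\ell, x)$ as defined in \eqref{eq:sigma}. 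By TAR, the number of such visits by time $n$ is $\Gamma_n(\ell,x) \approx n\gamma(\ell,x)$. Applying a martingale central limit theorem (e.g., the functional/Lindeberg form, using bounded rewards and the bounded $\tilde g$) gives $\frac{1}{\sqrt{n}}\sum_{k=1}^{\Gamma_n(\ell,x)} D_k \Rightarrow \gamma(\ell,x)^{1/2}\sigma(\ell,x)\,G(\ell,x)$. The weight attached to the $(\ell,x)$-sum in the linearization turns out to be $\pi(\ell,x)/\gamma(\ell,x)$ (the $\pi(\ell,x)$ from the steady-state weighting of rewards and the perturbation formula, the $1/\gamma(\ell,x)$ from the normalization $1/\max\{\Gamma_n(\ell,x),1\}$ in the definitions of $\hat P_n$ and $\hat r_n$), so each term contributes $\frac{\pi(\ell,x)}{\gamma(\ell,x)}\cdot\gamma(\ell,x)^{1/2}\sigma(\ell,x)\,G(\ell,x) = \frac{\pi(\ell,x)\sigma(\ell,x)}{\gamma(\ell,x)^{1/2}}G(\ell,x)$, matching \eqref{eq:CLT}. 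Independence across $(\ell,x)$ follows because the martingale differences for distinct $(\ell,x)$ are orthogonal (they live on disjoint sets of time indices, and conditionally on the past each has mean zero), and joint convergence with independent Gaussian limits follows from a multivariate martingale CLT; independence from $\gamma$ comes from conditioning on $\mathcal G_\infty$-measurable limits and a Slutsky/stable-convergence argument.

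The main obstacle is the first step — making the linearization rigorous when the state process need not be Markov. Three technical points need care: (i) controlling the remainder in the delta-method expansion of $P \mapsto \pi(P)$ uniformly, which requires a quantitative bound on the inverse of $I - \hat P_n(\ell) + \hat\Pi_n(\ell)$ near $I - P(\ell) + \Pi(\ell)$ — this is where we use that for $n \geq J$ the estimated chains are irreducible and, by consistency, eventually close to the true chains, so the fundamental matrix is well-conditioned with probability tending to one; (ii) showing the cross terms and higher-order terms are genuinely $o_p(n^{-1/2})$, which needs the rate $\hat P_n(\ell) - P(\ell) = O_p(n^{-1/2})$ — obtainable from the same martingale estimates applied to the raw transition counts — rather than mere consistency; and (iii) justifying the random time-change $\Gamma_n(\ell,x) \approx n\gamma(\ell,x)$ inside the martingale CLT, i.e., an Anscombe-type argument, which is legitimate because $\Gamma_n(\ell,x)/n \xrightarrow{p} \gamma(\ell,x) > 0$ by the TAR hypothesis and the martingale has stationary-variance increments. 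Once these are in place, the remaining computations — assembling the weights, checking orthogonality, and invoking the multivariate martingale CLT — are routine. I would relegate the detailed remainder estimates to the appendix and present the Poisson-equation telescoping and the martingale identification as the conceptual heart of the argument in the main text.
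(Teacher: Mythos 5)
Your proposal follows the same conceptual path as the paper — Poisson's equation to produce $\tilde{g}(\ell)$, reduction of the error to weighted sums of the martingale differences $D_j(\ell,x)$ with conditional variance $\sigma^2(\ell,x)$ as in \eqref{eq:sigma}, a martingale CLT, the TAR limits $\Gamma_n(\ell,x)/n \to \gamma(\ell,x)$ supplying the weights $\pi(\ell,x)/\gamma(\ell,x)$, and stable convergence to get $G$ independent of $\gamma$ — but it executes the linearization differently, and this is where the two arguments diverge. You expand $\hat{\pi}_n(\ell)\hat{r}_n(\ell)-\pi(\ell)r(\ell)$ around $(\pi,r)$ via the delta method for $P \mapsto \pi(P)$, which forces you to control a second-order remainder in $\hat{P}_n(\ell)-P(\ell)$, a cross term $(\hat{\pi}_n(\ell)-\pi(\ell))(\hat{r}_n(\ell)-r(\ell))$, and hence to establish $O_p(n^{-1/2})$ rates — you correctly flag this as the main obstacle. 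The paper sidesteps all of it: for $n \geq J$ one has the \emph{exact} identity $(\hat{\pi}_n(\ell)-\pi(\ell))(I-P(\ell)+\Pi(\ell)) = \hat{\pi}_n(\ell)(\hat{P}_n(\ell)-P(\ell))$, hence $(\hat{\pi}_n(\ell)-\pi(\ell))r(\ell) = \hat{\pi}_n(\ell)(\hat{P}_n(\ell)-P(\ell))\tilde{g}(\ell)$ with no remainder, and by writing the decomposition as $(\hat{\pi}_n(\ell)-\pi(\ell))r(\ell) + \hat{\pi}_n(\ell)(\hat{r}_n(\ell)-r(\ell))$ (with $\hat{\pi}_n$, not $\pi$, multiplying the reward error) no cross term ever appears; the random prefactors $\hat{\pi}_n(\ell,x)$ and $n/\Gamma_n(\ell,x)$ are then removed at the very end by Slutsky, using only consistency (Proposition \ref{prop: P convergence}), never a rate. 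Similarly, the paper keeps the martingale indexed by real time $j$ (the indicators absorbed into $D_j(\ell,x)$) so that the TAR hypothesis enters directly through the conditional-variance limit in the Hall–Heyde CLT, avoiding your Anscombe-type random time change. Your route can be made rigorous — the rates do follow from the same variance bounds plus positivity of $\gamma$ — but one point needs more care than you give it: with possibly \emph{random} policy limits, coordinatewise mixing limits plus "a multivariate martingale CLT" do not by themselves justify dividing by $\gamma(\ell,x)$ and asserting independence of $G$ from $\gamma$; you need joint stable convergence of the whole vector $\big(n^{-1/2}\sum_j D_j(\ell,x),\ \gamma(\ell,x)\big)_{\ell,x}$, which the paper obtains explicitly via the stable form of the martingale CLT combined with a Cramér–Wold argument on characteristic functions involving $\gamma$. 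So: same skeleton, but the paper's exact Poisson-equation identity and time-indexed martingale formulation eliminate precisely the technical burdens (remainder, cross term, random time change) that your write-up defers to the appendix.
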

%The notation ``$\Rightarrow$'' denotes weak convergence in the preceding theorem.

The full proof of Theorem \ref{thm:CLT} is in Appendix \ref{sec:var_proofs}.  A key idea in the proof is to show, via Poisson's equation, that:
\[     \hat{\alpha}_n(\ell) - \alpha(\ell) = \big(\hat{\pi}_n(\ell)-\pi(\ell)\big)r(\ell)
    =\hat{\pi}_n(\ell)\big(\hat{P}_n(\ell)-P(\ell)\big)\tilde{g}(\ell).
\]
We are then able to apply martingale arguments to analyze the right hand side of the preceding expression, by looking at the difference in the realized state transition, and the expected state transition.  These steps allow us to leverage classical martingale central limit theorem results; see Appendix \ref{sec:var_proofs} for details.

For later reference, the following two results will be useful.  The first applies Theorem \ref{thm:CLT} to show that we can lower bound the scaled asymptotic variance of the MLE.  The second shows that for TAR policies that have constant and positive policy limits, in fact we can exactly obtain the scaled asymptotic variance of the MLE.  We later use these results to show the existence of an optimal TAR policy with constant policy limits.  Proofs of both results are in the Appendix \ref{sec:var_proofs}.

% \newmr{\begin{remark} Theorem \ref{thm:CLT} implies that the policy limits are sufficient information for characterizing the asymptotic variance of the scaled MLE of a TAR policy. Consequently, for study of TAR policies we can focus on their space of feasible limits.
% \end{remark}}

\begin{corollary}
\label{cor:var_lowerbound}
Let $A$ be a TAR policy with almost surely positive policy limits $\gamma$, with associated MLE $\hat{\alpha}_n$.  
Then there holds:
\begin{equation}
\label{eq:var_lowerbound}
\liminf_{n \to \infty} n \Var(\hat{\alpha}_n - \alpha) \geq \sum_{\ell=1,2} \sum_{x\in S} \frac{\pi^2(\ell,x)\sigma^2(\ell,x)}{E\{\gamma(\ell,x)\}}.
\end{equation}
\end{corollary}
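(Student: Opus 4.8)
\textbf{Proof plan for Corollary~\ref{cor:var_lowerbound}.} The plan is to start from the CLT in Theorem~\ref{thm:CLT} and convert the weak-convergence statement into a lower bound on $\liminf_n n\Var(\hat\alpha_n-\alpha)$. The essential tool is the standard fact that weak convergence plus uniform integrability of the squared sequence gives convergence of second moments; here, since rewards are bounded and $\tilde g(\ell)$ is a fixed bounded function (finite state space), one expects $n(\hat\alpha_n-\alpha)^2$ to be uniformly integrable, so that $\lim_n n\Var(\hat\alpha_n-\alpha) = \Var(W)$, where $W$ is the limiting random variable on the right-hand side of \eqref{eq:CLT}. Even without full uniform integrability, the Portmanteau/Fatou direction suffices: $\liminf_n n\Var(\hat\alpha_n-\alpha)\ge \Var(W)$, which is all we need. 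So the first step is to identify $\liminf_n n\Var(\hat\alpha_n-\alpha)$ with a lower bound by $\Var(W)$.

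Next I would compute $\Var(W)$. Conditionally on the (random) policy limits $\gamma$, the random variable
\[
W=\sum_{x\in S}\frac{\pi(1,x)\sigma(1,x)}{\gamma(1,x)^{1/2}}G(1,x)-\sum_{x\in S}\frac{\pi(2,x)\sigma(2,x)}{\gamma(2,x)^{1/2}}G(2,x)
\]
is a sum of independent mean-zero Gaussians (the $G(\ell,x)$ are mutually independent and independent of $\gamma$), so its conditional variance is $\sum_{\ell=1,2}\sum_{x\in S}\pi^2(\ell,x)\sigma^2(\ell,x)/\gamma(\ell,x)$. Taking expectation over $\gamma$ and using the law of total variance (the conditional mean is $0$, so $\Var(W)=E[\Var(W\mid\gamma)]$) gives
\[
\Var(W)=\sum_{\ell=1,2}\sum_{x\in S}\pi^2(\ell,x)\sigma^2(\ell,x)\,E\!\left\{\frac{1}{\gamma(\ell,x)}\right\}.
\]
Finally, apply Jensen's inequality to the convex function $t\mapsto 1/t$ on $(0,\infty)$: $E\{1/\gamma(\ell,x)\}\ge 1/E\{\gamma(\ell,x)\}$, using that $\gamma(\ell,x)>0$ a.s. This yields exactly the claimed bound \eqref{eq:var_lowerbound}.

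The main obstacle is the first step: justifying that weak convergence of $n^{1/2}(\hat\alpha_n-\alpha)$ descends to a lower bound on the $\liminf$ of the (scaled) variances. One must be careful that $\Var(\hat\alpha_n-\alpha)=E[(\hat\alpha_n-\alpha)^2]-(E[\hat\alpha_n-\alpha])^2$, so I would first argue the bias term $E[\hat\alpha_n-\alpha]$ is $o(n^{-1/2})$ (or at least that $n(E[\hat\alpha_n-\alpha])^2\to 0$), which follows from consistency (Corollary~\ref{cor:consistency}) together with boundedness/uniform integrability of $n^{1/2}(\hat\alpha_n-\alpha)$ — or one can sidestep this by noting $\Var(\hat\alpha_n-\alpha)\le E[(\hat\alpha_n-\alpha)^2] + $ (a lower-order correction handled the same way). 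Then $\liminf_n nE[(\hat\alpha_n-\alpha)^2]\ge E[\liminf_n n(\hat\alpha_n-\alpha)^2]$ is not literally Fatou applied to the sequence, so the clean route is: use the Skorokhod representation (or the fact that $x\mapsto x^2$ is continuous and nonnegative) to get $\liminf_n E[f(n^{1/2}(\hat\alpha_n-\alpha))]\ge E[f(W)]$ for any bounded continuous nonnegative $f$, then apply this to $f(x)=\min(x^2,M)$ and let $M\to\infty$ by monotone convergence. This gives $\liminf_n nE[(\hat\alpha_n-\alpha)^2]\ge E[W^2]=\Var(W)$, completing the argument. Everything after this step is the routine Gaussian-variance computation and a one-line Jensen inequality.
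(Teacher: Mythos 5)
Your proposal is correct and takes essentially the same route as the paper's proof: apply the Skorohod representation and Fatou's lemma to the CLT \eqref{eq:CLT} to obtain $\liminf_{n} n\Var(\hat{\alpha}_n-\alpha) \ge E[W^2] = \sum_{\ell=1,2}\sum_{x\in S}\pi^2(\ell,x)\sigma^2(\ell,x)E\{1/\gamma(\ell,x)\}$ (your truncation-plus-monotone-convergence step and the conditional-Gaussian variance computation are exactly the details the paper leaves implicit), and then finish with Jensen's inequality. The only wrinkle is your treatment of the mean-centering issue: the ``sidestep'' inequality $\Var(\hat{\alpha}_n-\alpha)\le E[(\hat{\alpha}_n-\alpha)^2]$ points in the unhelpful direction and the uniform-integrability claim is neither established nor needed --- instead pass to a subsequence along which $E[n^{1/2}(\hat{\alpha}_n-\alpha)]$ converges in the extended reals and apply Fatou to the centered variables, using that $E[(W-m)^2]\ge E[W^2]$ for any constant $m$ since $E[W]=0$.
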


\begin{corollary}
\label{cor:asymptotic_var_constant_TAR}
Let $A$ be a TAR policy with almost surely constant and positive policy limits $\gamma$.  Let $\hat{\alpha}_n$ be the associated MLE.  Then:
\begin{equation}
    \label{eq:const_policy_limit_var}
 \lim_{n \to \infty} n \Var(\hat{\alpha}_n - \alpha) = \sum_{\ell=1,2} \sum_{x\in S} \frac{\pi^2(\ell,x)\sigma^2(\ell,x)}{\gamma(\ell,x)}.
\end{equation}
\end{corollary}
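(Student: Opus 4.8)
The plan is to combine the central limit theorem of Theorem \ref{thm:CLT} with a uniform integrability argument to upgrade weak convergence of $n^{1/2}(\hat\alpha_n - \alpha)$ to convergence of its second moment. By Theorem \ref{thm:CLT}, under a TAR policy with almost surely constant, positive policy limits $\gamma$, we have $n^{1/2}(\hat\alpha_n - \alpha) \Rightarrow Z$, where $Z = \sum_{x} \frac{\pi(1,x)\sigma(1,x)}{\gamma(1,x)^{1/2}} G(1,x) - \sum_{x} \frac{\pi(2,x)\sigma(2,x)}{\gamma(2,x)^{1/2}} G(2,x)$ is a centered Gaussian; since the $G(\ell,x)$ are independent standard normals and (by constancy) $\gamma$ is now a deterministic constant, a direct variance computation gives $\mathbb{E}[Z^2] = \sum_{\ell=1,2}\sum_{x\in S} \frac{\pi^2(\ell,x)\sigma^2(\ell,x)}{\gamma(\ell,x)}$, which is exactly the claimed right-hand side. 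So it remains to justify that $\lim_{n\to\infty} n\Var(\hat\alpha_n - \alpha) = \mathbb{E}[Z^2]$, i.e. that the limit of the scaled variances equals the variance of the weak limit.

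First I would establish the lower bound $\liminf_{n\to\infty} n\Var(\hat\alpha_n - \alpha) \ge \mathbb{E}[Z^2]$. This is immediate from Corollary \ref{cor:var_lowerbound}: with $\gamma$ constant, $\mathbb{E}\{\gamma(\ell,x)\} = \gamma(\ell,x)$, so the right-hand side of \eqref{eq:var_lowerbound} coincides with $\mathbb{E}[Z^2]$. (Alternatively, the $\liminf$ lower bound follows from weak convergence alone via Fatou's lemma applied to $(n^{1/2}(\hat\alpha_n-\alpha))^2$, together with the observation that $\hat\alpha_n$ is asymptotically unbiased by Corollary \ref{cor:consistency} plus boundedness.)

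The substantive step is the matching upper bound $\limsup_{n\to\infty} n\Var(\hat\alpha_n - \alpha) \le \mathbb{E}[Z^2]$, and for this I would show that the family $\{ n(\hat\alpha_n - \alpha)^2 : n \ge 1\}$ is uniformly integrable; combined with the weak convergence $n^{1/2}(\hat\alpha_n-\alpha) \Rightarrow Z$ and continuous-mapping to the square, uniform integrability yields $n\,\mathbb{E}[(\hat\alpha_n-\alpha)^2] \to \mathbb{E}[Z^2]$, and since $\mathbb{E}[\hat\alpha_n] \to \alpha = \mathbb{E}[Z]$ one also gets $n\Var(\hat\alpha_n - \alpha) \to \mathbb{E}[Z^2] = \Var(Z)$. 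To obtain uniform integrability, it suffices to exhibit a uniform bound on a slightly higher moment, e.g. $\sup_n \mathbb{E}\big[\big(n^{1/2}(\hat\alpha_n - \alpha)\big)^{2+\delta}\big] < \infty$ for some $\delta > 0$, or more simply $\sup_n n^2 \mathbb{E}[(\hat\alpha_n - \alpha)^4] < \infty$. This is where I would reuse the decomposition from the proof of Theorem \ref{thm:CLT}: for $n \ge J$, $\hat\alpha_n(\ell) - \alpha(\ell) = \hat\pi_n(\ell)(\hat P_n(\ell) - P(\ell))\tilde g(\ell)$, which together with the analogous reward term can be written as a normalized sum of martingale differences scaled by $1/\Gamma_n(\ell,x)$; since rewards and $\tilde g$ are bounded, and since $\Gamma_n(\ell,x)/n \to \gamma(\ell,x) > 0$ in probability with $\gamma$ constant, Burkholder-type moment inequalities for martingales give the required fourth-moment bound, modulo a careful treatment of the event $\{J > \epsilon n\}$ and of the denominators $\max\{\Gamma_n(\ell,x),1\}$ being atypically small — these rare events must be shown to contribute negligibly, using that $\hat\alpha_n$ is bounded (rewards are bounded, stationary distributions sum to one) so that the contribution of any event of probability $o(1/n^2)$ to $n^2\mathbb{E}[(\hat\alpha_n-\alpha)^4]$ vanishes, and that $P(J > \epsilon n)$ and $P(\Gamma_n(\ell,x) < \epsilon n)$ decay fast enough (geometrically, by large-deviation bounds for the occupation measure under the constant-limit TAR policy).

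The main obstacle I anticipate is precisely this moment/uniform-integrability bookkeeping near the exceptional events: the estimator's explicit form is only valid for $n \ge J$, and the inverses $(I - \hat P_n(\ell) + \hat\Pi_n(\ell))^{-1}$ appearing implicitly through $\hat\pi_n$ could in principle blow up when $\hat P_n$ is close to the boundary of irreducibility. Handling this cleanly requires either truncating on a high-probability "good" event where $\hat P_n(\ell)$ is uniformly close to $P(\ell)$ (so the fundamental matrix is uniformly bounded) and separately bounding the complement by its small probability times the (bounded) worst-case value of $n^{1/2}|\hat\alpha_n - \alpha|$, or invoking a quantitative version of the convergence in Proposition \ref{prop: P convergence}. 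I expect this to be routine given the boundedness hypotheses, but it is the step that needs the most care, and it is likely already carried out inside the proof of Theorem \ref{thm:CLT} in the appendix, from which the needed moment estimates can be quoted directly.
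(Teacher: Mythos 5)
Your lower bound is the paper's own (Corollary \ref{cor:var_lowerbound}, with $E\{\gamma(\ell,x)\}=\gamma(\ell,x)$ since the limits are constant), but the upper-bound half of your plan has a genuine gap. You propose to upgrade the weak convergence of Theorem \ref{thm:CLT} to convergence of second moments via uniform integrability of $\{n(\hat\alpha_n-\alpha)^2\}$, secured by a $(2+\delta)$- or fourth-moment bound, which in turn rests on geometric or large-deviation decay of $P(J>\epsilon n)$ and $P(\Gamma_n(\ell,x)<\epsilon n)$ ``under the constant-limit TAR policy.'' No such rates follow from the hypotheses: the TAR property with almost surely constant positive limits asserts only $\Gamma_n(\ell,x)/n\xrightarrow{p}\gamma(\ell,x)$, with no rate, and the policy may be arbitrarily history-dependent, so occupation-measure large deviations (which would need, e.g., a fixed Markovian sampling rule or mixing-type conditions) cannot be invoked. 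Since on the exceptional events the only available bound is $\sqrt{n}\,|\hat\alpha_n-\alpha|\le C\sqrt{n}$, tails decaying slower than the requisite polynomial order destroy the moment bound, so uniform integrability is not established by your argument. Your hope that the needed estimates are ``already carried out inside the proof of Theorem \ref{thm:CLT}'' is not borne out: that proof uses the martingale CLT (conditional Lindeberg condition plus convergence of conditional variances) and contains no unconditional moment or tail estimates for $\hat\alpha_n$.

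The paper's proof takes a different, second-moment-only route that sidesteps all of this. Starting from \eqref{eq:trans_to_Dj}, it writes $\hat\alpha_n-\alpha=Y_n(2)-Y_n(1)$, where $Y_n(\ell)$ is the martingale-difference sum $\sum_j D_j(\ell,x)/n$ weighted by the random weights $\hat\pi_n(\ell,x)\big/\big(\max\{\Gamma_n(\ell,x),1\}/n\big)$; it introduces the comparison quantity $Z_n(\ell)$ carrying the deterministic weights $\pi(\ell,x)/\gamma(\ell,x)$, computes $\lim_n n\Var(Z_n(2)-Z_n(1))$ exactly via orthogonality of the $D_j(\ell,x)$ (cross terms across times and across chains vanish), and then bounds $E\{(Y_n(2)-Y_n(1))^2\}$ by $E\{(Z_n(2)-Z_n(1))^2\}$ plus error terms involving the sup of the weight discrepancies, which are killed in $L^1$ and $L^2$ using Skorohod plus bounded convergence (here the constancy and positivity of $\gamma$ and boundedness of the $D_j$'s are what is actually used). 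Finally it combines this upper bound with Corollary \ref{cor:var_lowerbound}. In short, your Fatou/Jensen lower bound matches the paper, but for the matching upper bound you would either have to add rate assumptions on the policy to justify your moment bounds, or replace the uniform-integrability step by the paper's weight-replacement and orthogonality computation.
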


\subsection{Optimal TAR Policies with the MLE Estimator}
\label{ssec:optimality}

In this section we characterize efficient policies, i.e., those that achieve minimum asymptotic variance within the class of TAR policies with almost surely positive policy limits.  (Note that we know such policies are consistent from Corollary \ref{cor:consistency}, and thus we focus solely on asymptotic variance in considering efficiency.)  We show that the policy limits of any such policy can be characterized via the solution to a particular convex optimization problem.  We require the mild additional assumption in this section that $\sigma(\ell,x) > 0$ for all $\ell,x$; this will hold, e.g., if rewards are random in each state.

We start with the following formal definition of efficiency.
\begin{definition}
\label{def:efficient}
Let $A^*$ be a TAR policy with almost surely positive policy limits, with associated MLE $\hat{\alpha}^*_n$.  We say that $A^*$ is {\em efficient} if it has lower (scaled) asymptotic variance than any other TAR policy with almost surely positive policy limits; i.e., for any such policy $A$ with associated MLE $\hat{\alpha}_n$, there holds:
\begin{equation}
\label{eq:efficient}
 \limsup_{n \to \infty} n \Var(\hat{\alpha}^*_n - \alpha) \leq \liminf_{n \to \infty} n \Var(\hat{\alpha}_n - \alpha).
 \end{equation}
\end{definition}

The following theorem leverages the central limit theorem in Theorem \ref{thm:CLT}, and in particular Corollaries \ref{cor:var_lowerbound} and \ref{cor:asymptotic_var_constant_TAR}, to 
give an optimization problem whose solution characterizes the optimal (scaled) asymptotic variance.  The proof of the theorem is in the Appendix.
%Note that the optimal stationary Markov policy given in the theorem requires prior knowledge of the parameters of the two chains, which is of course unrealistic.  In the next section, we propose an {\em online} policy that is efficient, but requires no prior parameter knowledge.
%We are now led to ask natural questions: does an efficient policy exist?  If so, what is its asymptotic variance?  
%There exists an efficient policy that is stationary and Markov.

\begin{theorem}
\label{thm: stationary optimality}
Suppose that for all $\ell,x$, there holds $\sigma(\ell,x) > 0$.  Consider the following (convex) optimization problem:
\begin{align}
    \text{minimize}\ \ \ & \sum_{\ell=1}^{2} \sum_{x\in S} \frac{\pi^2(\ell,x)\sigma^2(\ell,x)}{\kappa(\ell,x)} \label{eq:MLE_opt1}\\
    \text{subject to}\ \ \ & \kappa \in \mathcal{K}. \label{eq:MLE_opt2}
\end{align}
This problem has a unique solution $\kappa^*$, and all entries of $\kappa^*$ are positive.  Any TAR policy $A^*$ that has almost surely constant policy limits $\kappa^*$ is efficient, and the scaled asymptotic variance of the MLE under $A^*$ is given by \eqref{eq:MLE_opt1} evaluated at $\kappa^*$; i.e., if we let $\hat{\alpha}_n^*$ denote the resulting MLE, we have:
\begin{equation}
    \label{eq:Astar_var}
 \lim_{n \to \infty} n \Var(\hat{\alpha}_n^* - \alpha) = \sum_{\ell=1}^{2} \sum_{x\in S} \frac{\pi^2(\ell,x)\sigma^2(\ell,x)}{\kappa^*(\ell,x)}.
\end{equation}
%In particular, the stationary Markov policy defined by \eqref{eq:Markov_from_kappa} with $\kappa^*$ is efficient.
\end{theorem}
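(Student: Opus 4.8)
The plan is to deduce the theorem from Corollaries~\ref{cor:var_lowerbound} and~\ref{cor:asymptotic_var_constant_TAR}, which together sandwich the scaled asymptotic variance of the MLE, plus an elementary convex-analysis argument about the program~\eqref{eq:MLE_opt1}--\eqref{eq:MLE_opt2}. Throughout write $c(\ell,x) = \pi^2(\ell,x)\sigma^2(\ell,x)$ and $f(\kappa) = \sum_{\ell=1}^2\sum_{x\in S} c(\ell,x)/\kappa(\ell,x)$. Since each $P(\ell)$ is irreducible we have $\pi(\ell,x)>0$, and by hypothesis $\sigma(\ell,x)>0$, so every $c(\ell,x)>0$; hence $f$ is finite, continuous and strictly convex on the positive orthant, and $f(\kappa)\to\infty$ whenever some coordinate of $\kappa$ tends to $0$.

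To get existence, uniqueness, and positivity of $\kappa^*$, I would first check that $\mathcal{K}$ contains a strictly positive point. Let $\bar P = \tfrac12 P(1) + \tfrac12 P(2)$; then $\bar P \ge \tfrac12 P(1)$ entrywise, so $\bar P$ inherits irreducibility from $P(1)$ and has a strictly positive stationary distribution $\bar\pi$, and one checks directly that $\kappa^\circ(\ell,x) := \tfrac12\bar\pi(x)$ satisfies~\eqref{eq: kappa eq 1}--\eqref{eq: kappa eq 3}, so $\kappa^\circ \in \mathcal{K}$ (this is also immediate from Remark~\ref{rem:markov_TAR} applied to the uniform mixing policy). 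The set $\mathcal{K}$ is compact: it is closed, and by~\eqref{eq: kappa eq 2}--\eqref{eq: kappa eq 3} it is contained in $[0,1]^{2|S|}$. On the sublevel set $\{\kappa\in\mathcal{K}: f(\kappa)\le f(\kappa^\circ)\}$ --- which is compact and, by the boundary blow-up of $f$, bounded away from the coordinate hyperplanes (indeed $\min_{\ell,x}\kappa(\ell,x)\ge (\min_{\ell,x}c(\ell,x))/f(\kappa^\circ)$ there) --- the continuous function $f$ attains its minimum at some $\kappa^*$ with all entries positive, and this is its minimum over all of $\mathcal{K}$. Strict convexity of $f$ together with convexity of $\mathcal{K}$ gives uniqueness of $\kappa^*$.

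Next I would exhibit the efficient policy and verify the variance formula. Since $\kappa^*$ is a positive element of $\mathcal{K}$, Remark~\ref{rem:markov_TAR} supplies a (stationary Markov) TAR policy with almost surely constant policy limits $\kappa^*$ regardless of initial state, so such policies exist; and for any TAR policy $A^*$ with almost surely constant policy limits $\kappa^*$, Corollary~\ref{cor:asymptotic_var_constant_TAR} yields $\lim_n n\Var(\hat\alpha_n^*-\alpha) = f(\kappa^*)$, which is~\eqref{eq:Astar_var}. For efficiency, let $A$ be any TAR policy with almost surely positive policy limits $\gamma$. By Proposition~\ref{prop: policy limits}, $\gamma\in\mathcal{K}$ almost surely; since $\mathcal{K}$ is cut out by finitely many linear (in)equalities and $\gamma$ is bounded, $\bar\gamma := E\{\gamma\}\in\mathcal{K}$, with $\bar\gamma(\ell,x) = E\{\gamma(\ell,x)\}>0$. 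Then Corollary~\ref{cor:var_lowerbound} gives
\[ \liminf_{n\to\infty} n\,\Var(\hat\alpha_n-\alpha) \;\ge\; \sum_{\ell=1}^2\sum_{x\in S}\frac{c(\ell,x)}{E\{\gamma(\ell,x)\}} \;=\; f(\bar\gamma)\;\ge\; f(\kappa^*), \]
the last inequality being optimality of $\kappa^*$ in~\eqref{eq:MLE_opt1}--\eqref{eq:MLE_opt2}. Combining with the displayed limit for $A^*$, $\limsup_n n\Var(\hat\alpha_n^*-\alpha) = f(\kappa^*)\le\liminf_n n\Var(\hat\alpha_n-\alpha)$, which is precisely the efficiency condition~\eqref{eq:efficient}.

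The substantive analytic work is already contained in Theorem~\ref{thm:CLT} and its corollaries; the only genuinely new ingredients here are (i) the coercivity-plus-compactness argument that forces the unique minimizer of the convex program to be interior, and (ii) the observation that convexity of $\mathcal{K}$ lets us replace a random policy limit $\gamma$ by its mean $E\{\gamma\}$ without weakening the lower bound --- equivalently, randomizing the policy limits cannot reduce asymptotic variance. I expect the only delicate point to be the careful treatment of the boundary behavior of $f$ in step (i); everything else is linearity of expectation and the two corollaries.
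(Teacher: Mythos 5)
Your proof is correct and follows essentially the same route as the paper: sandwiching the scaled asymptotic variance between Corollary~\ref{cor:var_lowerbound} (combined with Proposition~\ref{prop: policy limits} and convexity/compactness of $\mathcal{K}$ to pass from the random limits $\gamma$ to $E\{\gamma\}\in\mathcal{K}$) and Corollary~\ref{cor:asymptotic_var_constant_TAR}, together with strict convexity and boundary blow-up of the objective to get uniqueness and positivity of $\kappa^*$, with Remark~\ref{rem:markov_TAR} supplying a policy attaining $\kappa^*$. The only difference is that you spell out the coercivity/interiority step in more detail (exhibiting the strictly positive feasible point $\kappa^\circ$ built from the half-half mixture chain and bounding the sublevel set away from the boundary), which the paper simply asserts as straightforward.
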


\section{An Online Experimental Design: \OnlineETI}
\label{sec:online}

%The analysis in the preceding section characterizes efficient policies for the MLE.  In particular, b
Based on Theorem \ref{thm: stationary optimality}, it follows that one efficient policy is the stationary Markov policy obtained by inserting $\kappa^*$ in \eqref{eq:Markov_from_kappa}.  However, such a policy requires knowledge of the system parameters (as these are required to solve the optimization problem \eqref{eq:MLE_opt1}-\eqref{eq:MLE_opt2} that yields $\kappa^*$).  Of course, if these parameters were already known, there would be no need for experiment design and estimation in the first place.

In this section, we instead construct an {\em online} policy (i.e., one that does not use {\em a priori} knowledge of system parameters) that is consistent and efficient in the limit as $n \to \infty$.  In particular, the policy we construct will be TAR with policy limits $\kappa^*$.  Our proposed policy, called \OnlineETI (for {\em Online Experimentation with Temporal Interference}) works as follows.  At every time step $n$, \OnlineETI maintains an MLE of $P(\ell)$ as $\hat{P}_n(\ell)$.  Initially, in every state, the policy samples chain $\ell = 1,2$ with probability 0.5; this continues until $\hat{P}_n(\ell)$ becomes irreducible (with the associated stationary distribution denoted $\hat{\pi}_n(\ell)$).  \OnlineETI estimates the mean and variance of one-step rewards as well, for each triple $(A_n, X_n, X_{n+1}) = (\ell,x,y)$, $\ell = 1,2$, $x,y \in S$.  These estimates are used to estimate $\tilde{g}(\ell)$, and thus yield an estimate $\hat{\sigma}_n(\ell,x)$ for $\sigma(\ell,x)$ (cf.~\eqref{eq:sigma}).  

Using these estimates, \OnlineETI computes $\hat{\kappa}_n$ as the minimizer of $\sum_{\ell = 1,2} \sum_{x \in S} \hat{\pi}_n(\ell,x)^2 \hat{\sigma}_n(\ell,x)^2/\kappa(\ell,x)$ over $\kappa \in \mathcal{K}$ (cf.~\eqref{eq:MLE_opt1}-\eqref{eq:MLE_opt2}).  At each time step $n$, \OnlineETI then samples from chain $\ell$ with the following probability:
\begin{equation}
\label{eq:p_n(ell,x)}
  \hat{p}_n(\ell,x) = \left(1-M_n(x)^{-1/2}\right)\left(\frac{\hat{\kappa}_n(\ell,x)}{\hat{\kappa}_n(1,x)+\hat{\kappa}_n(2,x)}\right) +\frac{1}{2}M_n(x)^{-1/2}, 
\end{equation}
where $M_n(x)$ is the cumulative number of visits to state $x$ up to time $n$.  In other words, $\hat{p}_n(\ell,x)$ is chosen proportional to $\hat{\kappa}_n(\ell,x)$, but with an additional $1/2 \cdot M_n(x)^{-1/2}$ probability of playing either chain.  This latter term is {\em forced exploration}: it ensures sufficient exploration of both chains to give consistent estimates of model parameters, without influencing the policy limits, and thus the asymptotic variance, of the policy.  As $n \to \infty$, we show that $\hat{\kappa}_n(\ell,x) \to \kappa^*(\ell,x)$ almost surely, and thus that $\hat{p}_n(\ell,x)$ converges to the choice probability of the optimal stationary Markov policy defined by inserting $\kappa^*$ in \eqref{eq:Markov_from_kappa}.  The full pseudocode of $\OnlineETI$ is in Appendix \ref{sec:online_pseudocode}.  We have the following theorem; the full proof appears in Appendix \ref{sec:proofs_online}.
%\rj{We should comment somewhere on the open problem of finite horizon optimization that this leaves.} \rj{Proof moved to appendix.}

%\mr{In this pseudo code I have used subscript $n$ because we need the notation later for the proofs. But we do not need new memory for all variables per sample $n$ and keep all history of variables, since we record required aggregate stats.}
% \begin{remark} In the above experiment, $\epsilon_n$ provides force exploration. It is to ensure asymptotically the algorithm takes infinite samples from each state-action pair with probability one. Note that without such force exploration in cases where $\sigma_n(\ell, x)$ becomes $0$ the algorithm stops exploring the $(\ell, x)$ state-action. The exploration decays to zero in a rate that has no effect on the asymptotic variance.
%% \end{remark}

\begin{theorem}\label{thm: alg efficiency} Suppose that for all $\ell,x$, there holds $\sigma(\ell,x) > 0$. Then \OnlineETI  is a TAR policy with policy limits $\kappa^*$ (as defined in Theorem \ref{thm: stationary optimality}), and thus it is consistent and efficient.
\end{theorem}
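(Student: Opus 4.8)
The plan is to verify \OnlineETI directly against Definition~\ref{def:tar}: I will show it is a TAR policy whose policy limits are almost surely the constant vector $\kappa^*$ from Theorem~\ref{thm: stationary optimality}. Consistency is then immediate from Corollary~\ref{cor:consistency}, and efficiency from the second half of Theorem~\ref{thm: stationary optimality}, which states that any TAR policy with constant policy limits $\kappa^*$ is efficient. The proof has four stages: (i) show that, almost surely, every state is visited infinitely often and every pair $(\ell,x)$ is sampled infinitely often, so $\Gamma_n(\ell,x)\to\infty$; (ii) deduce almost-sure convergence of all plug-in estimates $\hat P_n,\hat\pi_n,\hat\sigma_n,\hat\kappa_n$ to their true values, hence $\hat p_n(\ell,x)\to p^*(\ell,x):=\kappa^*(\ell,x)/(\kappa^*(1,x)+\kappa^*(2,x))$; (iii) show the induced state process has empirical occupation frequencies converging to the stationary law of the limiting stationary Markov policy; and (iv) combine these to obtain $\Gamma_n(\ell,x)/n\to\kappa^*(\ell,x)$. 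I expect stage (i) to be the main obstacle, since it must be carried out before any law of large numbers is available; a secondary technical point is the joint continuity in stage (ii) of the minimizer of the convex program in Theorem~\ref{thm: stationary optimality} with respect to both its objective \emph{and} its (estimated) constraint set, though uniqueness and strict positivity of $\kappa^*$ make this routine.

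For stage (i), the key structural fact is that the forced-exploration term in \eqref{eq:p_n(ell,x)} makes $\hat p_n(\ell,x)\ge \tfrac12 M_n(x)^{-1/2}$ for \emph{both} $\ell$, so the one-step kernel of $X_n$ dominates $\tfrac12 M_n(x)^{-1/2}\max_\ell P(\ell,x,\cdot)$; and since $P(1)$ is irreducible, the ``union graph'' in which $x\to y$ whenever $P(1,x,y)>0$ or $P(2,x,y)>0$ is strongly connected. Let $B$ be the random set of states visited infinitely often; $B\neq\varnothing$ trivially. If $B\neq S$, strong connectivity produces an edge from some $x\in B$ to some $y\notin B$; at the $k$-th visit to $x$ the conditional probability of stepping to $y$ next is at least $c\,k^{-1/2}$ for a constant $c>0$, and since $\sum_k k^{-1/2}=\infty$, the conditional Borel--Cantelli lemma (Lévy's extension) forces $y$ to be visited infinitely often on $\{x$ visited i.o.$\}$, contradicting $y\notin B$. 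Hence $B=S$ almost surely, and the same argument applied to the event ``action $\ell$ is chosen at the $k$-th visit to $x$'' gives $\Gamma_n(\ell,x)\to\infty$ for all $\ell,x$.

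Stage (ii) is then routine. Conditionally on $\{X_j=x,A_j=\ell\}$ the increments defining $\hat P_n(\ell,x,\cdot)$, $\hat r_n(\ell,x)$, and the estimated one-step reward variances are i.i.d.\ bounded draws, so a martingale strong law on $\{\Gamma_n(\ell,x)\to\infty\}$ gives $\hat P_n(\ell)\to P(\ell)$, $\hat r_n(\ell)\to r(\ell)$, and convergence of the reward-variance estimates, all almost surely (note that I cannot invoke Proposition~\ref{prop: P convergence} here, since TAR with positive limits is exactly what is being proved). Continuity of the stationary distribution and the fundamental matrix near the irreducible $P(\ell)$ then yields $\hat\pi_n(\ell)\to\pi(\ell)$ and $\hat\sigma_n(\ell,x)\to\sigma(\ell,x)$. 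Finally, the program of Theorem~\ref{thm: stationary optimality} is well posed: its objective coefficients $\hat\pi_n(\ell,x)^2\hat\sigma_n(\ell,x)^2$ converge to the positive numbers $\pi(\ell,x)^2\sigma(\ell,x)^2$, its estimated feasible set (the polytope \eqref{eq: kappa eq 1}--\eqref{eq: kappa eq 3} with $\hat P_n$ in place of $P$) converges in Hausdorff distance to $\mathcal{K}$, and the limiting problem has a unique and strictly positive minimizer $\kappa^*$; a Berge-type stability argument then gives $\hat\kappa_n\to\kappa^*$ almost surely, and since $M_n(x)\to\infty$ the probabilities in \eqref{eq:p_n(ell,x)} satisfy $\hat p_n(\ell,x)\to p^*(\ell,x)$ almost surely.

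For stages (iii)--(iv), let $Q$ be the irreducible transition matrix of the limiting stationary Markov policy obtained by inserting $\kappa^*$ into \eqref{eq:Markov_from_kappa} (cf.\ Remark~\ref{rem:markov_TAR}), with stationary distribution $\pi^*$, where $\pi^*(x)=\kappa^*(1,x)+\kappa^*(2,x)$; let $Q_j$ be the actual $\mathcal{G}_j$-measurable one-step kernel of $X$ under \OnlineETI, so $\|Q_j-Q\|_\infty\to0$ almost surely by stage (ii). Fix $x$ and let $h$ solve Poisson's equation $(I-Q)h=f$ with $f(y)=I(y=x)-\pi^*(x)$. The pathwise identity
\[
h(X_n)-h(X_0)=-\sum_{j=0}^{n-1}\big(I(X_j=x)-\pi^*(x)\big)+\sum_{j=0}^{n-1}\big((Q_j-Q)h\big)(X_j)+\sum_{j=0}^{n-1}M_{j+1},
\]
where $M_{j+1}=h(X_{j+1})-\mathbb{E}[h(X_{j+1})\mid\mathcal{G}_j]$ is a bounded martingale difference, gives upon division by $n$: the left side $\to0$ ($h$ bounded), the martingale term $\to0$ almost surely ($L^2$ martingale strong law and Kronecker's lemma), and the middle term $\to0$ as a Cesàro average of a null sequence; hence $\tfrac1n\sum_{j<n}I(X_j=x)\to\pi^*(x)$ almost surely. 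A second decomposition $I(A_j=\ell)=\hat p_j(\ell,X_j)+\big(I(A_j=\ell)-\hat p_j(\ell,X_j)\big)$, together with $\hat p_j(\ell,x)\to p^*(\ell,x)$, then yields $\tfrac1n\Gamma_n(\ell,x)\to p^*(\ell,x)\pi^*(x)=\kappa^*(\ell,x)$ almost surely. Thus \OnlineETI is TAR with constant positive policy limits $\kappa^*$, so it is consistent by Corollary~\ref{cor:consistency} and efficient by Theorem~\ref{thm: stationary optimality}.
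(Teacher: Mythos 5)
Your proof is correct, and it follows the paper's overall outline — forced exploration gives $M_n(x)\to\infty$ and $\Gamma_n(\ell,x)\to\infty$ via a Borel--Cantelli argument, then almost sure convergence of $\hat P_n,\hat\pi_n,\hat\sigma_n$ yields $\hat\kappa_n\to\kappa^*$ and $\hat p_n\to p^*$ by a Berge-type continuity-of-the-minimizer argument, and the final decomposition $I(X_j=x,A_j=\ell)=I(X_j=x)p^*(\ell,x)+I(X_j=x)(\hat p_j-p^*)+I(X_j=x)(I(A_j=\ell)-\hat p_j)$ is exactly the paper's closing step. Where you genuinely diverge is the key intermediate fact that the occupation frequencies converge, $\tfrac1n\sum_{j<n}I(X_j=x)\to\zeta^*(x)=\kappa^*(1,x)+\kappa^*(2,x)$. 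The paper gets this by introducing the empirical one-step kernel $\hat Q_n$, proving $\hat Q_n\xrightarrow{p}Q$ through a martingale argument indexed by visit times to each state, and then combining the empirical balance equation $M_n(x)/n\approx\sum_y \hat Q_n(\cdot,\cdot)M_n(y)/n$ with a Prohorov/subsequence compactness argument, yielding only convergence in probability. You instead solve Poisson's equation $(I-Q)h=I(\cdot=x)-\zeta^*(x)$ for the \emph{limiting} kernel and telescope $h(X_n)-h(X_0)$ against the actual time-varying kernel $Q_j$, so the error splits into a bounded martingale (killed by the $L^2$ strong law plus Kronecker), a Ces\`aro average of $((Q_j-Q)h)(X_j)$ (killed by $\|Q_j-Q\|_\infty\to0$ a.s.), and the centered occupation sum. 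This ``adiabatic'' perturbed-Markov-chain argument is cleaner, avoids constructing $\hat Q_n$ and the fixed-point compactness step, and delivers almost sure convergence of $\Gamma_n(\ell,x)/n$ to $\kappa^*(\ell,x)$, which is stronger than the convergence in probability required by Definition~\ref{def:tar}; the paper's route, by contrast, stays entirely within the elementary toolkit (orthogonality of martingale differences, Chebyshev, subsequences) already used earlier in the appendix. One small caveat: as specified in the paper and its pseudocode, \OnlineETI minimizes the estimated objective over the \emph{true} set $\mathcal{K}$, so your claim that the estimated feasible polytope (built from $\hat P_n$) converges in Hausdorff distance to $\mathcal{K}$ is not needed — and if you do intend the constraint set to be estimated, that set-convergence assertion deserves its own justification rather than a one-line appeal, since perturbing equality constraints does not give Hausdorff continuity for free.
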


\section{Conclusion}
\label{sec:conclusion}

%Our paper has characterized optimal experimental design in the presence of temporal interference, when estimation proceeds via maximum likelihood.
We close with a few comments about additional results and promising future directions.  First, we note that in our paper, we have focused on an adaptive (randomized) experimental design that can switch between the two Markov chains at any state, and we use nonparametric maximum likelihood estimation.  An alternative approach involves estimation via {\em sample averages}: for any time horizon $T$, we compute the time-average reward earned by chains $\ell = 1,2$ respectively, during each of the periods where each corresponding chain was run.  Indeed, standard switchback designs (cf.~Section \ref{sec:intro}) correspond to such designs, but where the chain being run is changed at predetermined time intervals.

In general, sample average estimation (SAE) will not be consistent, as discussed in the Introduction.  However, one approach to eliminating this asymptotic bias is to use a {\em regenerative} aproach: we fix a single state $x_0 \in S$, and {\em only} consider policies that change chains when in the state $x_0$.  Specifically, we consider regenerative policies that choose $\ell = 1,2$ with probabilities $q_1, q_2 = 1 - q_1$ when in state $x_0$, respectively (where $0 < q_1, q_2 < 1$).  We refer to these designs as {\em regenerative} policies.  By the strong Markov property, SAE with regenerative policies will yield a consistent (i.e., asymptotically unbiased) estimate of the treatment effect $\alpha$.  Using similar techniques to those described in this paper, it is straightforward to characterize the asymptotic variance of such a policy under SAE, and again the variance-minimizing regenerative policy can be obtained in terms of model primitives.  See \cite{johari2019simons} for further details.

Second, in this paper we focus on asymptotic efficiency rather than fixed finite horizons.  In fact, the term $M_n(x)^{-1/2}$ in \eqref{eq:p_n(ell,x)} can be replaced by $M_n(x)^{-\beta}$ for any $0 < \beta < 1$, and the result of Theorem \ref{thm: alg efficiency} still holds.  We conjecture that $\beta$ serves as a tuning parameter between finite horizon bias and variance: informally, as $\beta$ increases, forced exploration decreases, yielding higher bias but lower variance.  A formal investigation of optimal finite horizon experimental design remains an important direction for future work.

Finally, our approach is fully nonparametric, with no structural assumptions on the system.  In practice, full knowledge of the state space may be infeasible.  Nevertheless, we believe our main insights regarding experimental design that adaptively samples the two chains, with the goal of minimizing variance of the treatment effect estimate, can still provide guidance on optimal design in such applied domains.  This remains an active direction of future study.

% The algorithm has several desired features for practical implementation. We study these features, and propose some ideas which can improve the algorithm for practical use. 

% \mr{We have two conditions for the forced exploration. The first condition was for $\Gamma_n(\ell,x)\xrightarrow^{a.s.} \infty$ and in the form of a lower bound for exploration: $ \lim_{n\rightarrow \infty}\sum_n M_n(x)^\beta=\infty$. The second is for $\hat{p}_n\xrightarrow{a.s.} p^*$. This forms an upper bound on the exploration and requires $\frac{1}{n} \sum_n M(n)^\beta=0$. Combination of these two conditions is necessary and sufficient condition for the exploration rate to work. Consequently if exploration is in the form of $M_n(x)^\beta$, then  $-1<\beta<0$ satisfies the conditions. Parameter $\beta$ is the tuning nub of the algorithm, taken as input. Higher $\beta$ improves finite horizon performance.}

% \begin{remark} The assignment policy of the proposed experiment is independent of the experiment sampling budget $n$, and therefore the experiment is asymptotically efficient on the go. This allows extending the experiment as far as needed.  
% \end{remark}

%\section*{Broader Impact}

%This is a theoretical work on experimental design, and does not present any foreseeable societal consequence.  

\bibliographystyle{abbrv}
\bibliography{myrefs}

\newpage
\appendix
%\section*{Supplementary Material for "Adaptive Experimental Design with\\ Temporal Interference: A Maximum Likelihood Approach"}

\section{An Example: Cooperative Exploration}
\label{sec:example_cooperative}

Throughout this section, we refer to the two Markov chains depicted in Figure \ref{fig: cooperative}.  The state space for both chains is $S = \{ 1, \ldots, s\}$, where $s > 1$.  The red chain corresponds to $\ell = 1$ and the blue chain corresponds to $\ell = 2$.  The transition probabilities are as depicted in the figure.  In particular, we assume that chain $1$ has $P(x,x+1) = P(s,1) = 1$  for $x = 1,\ldots,s-1$, and chain $2$ has $P(x,x-1) = P(1,s) = 1$ for $x = 2,\ldots,s$.  

We assume the experimenter {\em knows} the transition matrices exactly (as they are deterministic), and thus the only uncertainty in estimating the reward distribution comes from uncertainty regarding the reward distribution of each chain.  We assume each chain {\em only} earns a reward in state $x = 1$.  In particular, chain $\ell$ earns a reward that is Bernoulli($q(\ell)$) in state $1$, for some unknown parameter $q(\ell)$ with $0 < q(\ell) < 1$.  Clearly the stationary distribution of each chain is $\pi(\ell,x) = 1/s$, and so the steady state mean reward of each chain is $\alpha(\ell) = q(\ell)/s$.  Thus the treatment effect is $(q(2) - q(1))/s$.

First, suppose that for $\ell = 1,2$ we wanted to estimate only $\alpha(\ell)$ by running chain $\ell$, i.e., $A_n = \ell$ for all $n$.  Then note that in every $S$ steps, only one observation is received of the reward in state $1$.  Let $\hat{\alpha}_n(\ell)$ denote the maximum likelihood estimate of steady state reward obtained from the first $n$ steps.  Given the structure of this chain, it is straightforward to check that the MLE at time $n > s$ reduces to the sample average of $\lfloor n/s \rfloor$ independent Bernoulli($q(\ell)$) samples.  This estimator has variance that scales as $\Theta(s/n)$.  Thus, any attempt at estimation of the variance of steady state reward by running each chain in isolation will have variance that scales with $s$.
%\mr{not sure about switch back. Consider starting from state s and running chain 1 for two steps, then chain 2 for two steps and so on and so forth. This $n/2$ of the samples coming from state 1. Now this is if the initial state is s. For other initial states, one can make a generalized switchback that runs one policy for a while and then starts switching between the two chains as per the policy you have mentioned in the next paragraph. The example we had with uncertainty in transitions would show this need for adaptive sampling better. Because with deterministic transitions, depending on the initial state one can make a deterministic policy to capture the samplings needed. We could have only a single state with unknown transitions (state 1) to show this.}

On the other hand, now suppose we use the following sampling policy: the policy always samples chain $1$ when in state $s$;  the policy always samples chain $2$ in states $2, \ldots, s-1$; and in successive visits to state $1$, the policy deterministically alternates between sampling chains $1$ and $2$.
%\mr{I think this policy is actually a deterministic policy. I see we one can look at it as an adaptive policy but it is replicable with a deterministic policy and hence the never ending discussions in the control community!}
Suppose for simplicity that this chain starts at $X_0 = 1$.  Then in every four periods, this chain obtains one independent sample each of a reward from chain 1 in state 1 (i.e., Bernoulli($q(1)$), and a reward from chain 2 in state 1 (i.e., Bernoulli($q(2)$).  Thus the maximum likelihood estimator of $\alpha(\ell)$ will have variance that scales as $\Theta(4/n)$, and in particular, does {\em not} grow with $s$.  In particular, the improvement in variance under this policy relative to the preceding approach can be made unboundedly large by increasing $s$.

This example illustrates the surprising insight that by {\em cooperatively exploring} using {\em both} chains together, substantial benefits in estimation variance can be achieved relative to the variance of estimation with each chain in isolation.  In this example, both approaches to estimation will be consistent.
%\mr{are we using mle for switchback? because it is not consistent with sae. Actually we have not defined switchback properly but the reference we make in the intro uses sae.}
However, the state-dependent sampling policy leads to a substantial reduction in variance, because it benefits from cooperative exploration: for each chain $\ell =1,2$, the {\em other} chain is used to drive the system back to where samples are most needed to reduce variance.  By contrast, running each chain in isolation forces the experimenter to wait $s$ time steps between successive observations of the random reward in state $1$.  When $s$ becomes larger, the long run average time spent in state 1 approaches $1/2$ for the state-dependent sampling policy, but approaches zero for either chain in isolation. %\mr{In the example I tired to write I was bit over ambitious and started calculating optimal Markov policy $p^*$ and $\kappa^*$ and calculate the variance etc. Maybe we could do something similar here for simple cases for example when the Bernoullies are the same for chain 1 and 2 in state 1.}

\begin{figure}[h]
    \centering
    \includegraphics[width=0.3\textwidth]{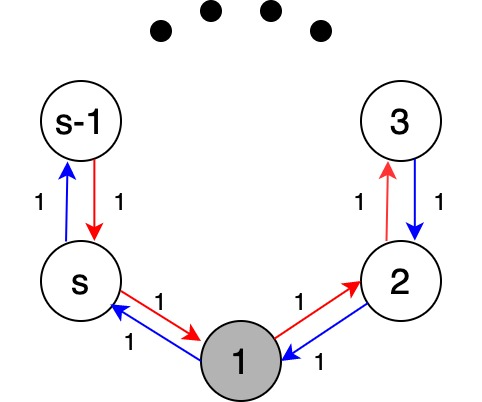}
    \caption{The two Markov chains described in Appendix \ref{sec:example_cooperative}.  Chain $1$ is red, and chain $2$ is blue.  Rewards are only earned in state $1$ for each chain; in particular, the reward distribution in state 1 is Bernoulli($q(\ell)$) for chain $\ell$.}
    \label{fig: cooperative}
\end{figure}

\section{Proofs: Section \ref{sec:tar}}
\label{sec:tar_proofs}

\begin{proof}[Proof of Proposition \ref{prop: policy limits}]
Relations (\ref{eq: kappa eq 2}) and (\ref{eq: kappa eq 3}) are obvious. As for (\ref{eq: kappa eq 1}), note that
\begin{align}
&\frac{1}{n}\Gamma_n(1,y)+\frac{1}{n} \Gamma_n(2,y)\\
=&\frac{1}{n}\sum_{j=\newrj{0}}^{n-1}I(X_j=y)\\
=&\frac{1}{n}\sum_{j=1}^{n}I(X_j=y)+\frac{1}{n}\big(I(X_0=y)-I(X_n=y)\big)\\
=& \frac{1}{n}\sum_{j=1}^{n}I(X_j=y)+O_p\left(\frac{1}{n}\right)\\
=& \sum_{\ell=1}^{2} \sum_{x\in S} \frac{1}{n} \sum_{j=1}^{n} I(X_{j-1}=x, A_{j-1}=\ell, X_j=y)+O_p\left(\frac{1}{n}\right)\\
=& \sum_{\ell=1}^{2}\sum_{x\in S} \frac{1}{n}\Gamma_n(\ell,x)P(\ell,x,y)\\
&+\sum_{\ell=1}^{2}\sum_{x\in S}\bigg\{ \frac{1}{n} \sum_{j=1}^{n} I(X_{j-1}=x, A_{j-1}=\ell)[I(X_j=y)-P(\ell,x,y)]\bigg\}+O_p\left(\frac{1}{n}\right).\label{eq: gamma eq 4}
\end{align}
(Here we use the notation that $f(n) =O_p(1/n)$ to denote stochastic boundedness of $n f(n)$: for all $\epsilon > 0$, there exists deterministic $M$ such that $P(|n f(n)| > M) < \epsilon$ for all $n$.)

Let $W_j = I(X_{j-1}=x, A_{j-1}=\ell)[I(X_j=y)-P(\ell,x,y)]$.  This is a martingale difference sequence adapted to  $\mathcal{G}_{j}$.  In particular, as a result the $W_j$ are {\em orthogonal} in the sense that for $j < k$, there $E\{_j W_k\} = 0$.  (This result follows by conditioning on $\mathcal{G}_j$ and nesting conditional expectations: $E\{W_j W_k\} = E\{E\{W_k | \mathcal{G}_j\} W_j\} = 0$.)  Using orthogonality of the martingale differences implies that 
\begin{align}
    &E\bigg\{\bigg(\frac{1}{n}\sum_{j=1}^{n} I(X_{j-1}=x, A_{j-1}=\ell)[I(X_j=y)-P(\ell,x,y)]\bigg)^2\bigg\}\\
    =&E\bigg\{\frac{1}{n^2} \Gamma_{n}(\ell,x)\big(P(\ell,x,y)-P^2(\ell,x,y)\big)\bigg\} \rightarrow  0
\end{align}
as $n\rightarrow \infty$. Therefore, by Chebyshev's inequality
\begin{align}\label{eq: convergence P(l) martingale}
    \frac{1}{n}\sum_{j=1}^{n} I(X_{j-1}=x, A_{j-1}=\ell)[I(X_j=y)-P(\ell,x,y)]\xrightarrow{p} 0
\end{align}
as $n\rightarrow \infty$. Taking the limits in (\ref{eq: gamma eq 4}) yields (\ref{eq: kappa eq 1}). 
\end{proof}

\begin{proof}[Proof of Proposition \ref{prop: P convergence}]
We start by proving \eqref{eq:P_convergence}.  We recall that 
\begin{align}
    \hat{P}_n(\ell,x,y)=\frac{\sum_{j=0}^{n-1} I(X_j=x, A_j=\ell, X_{j+1}=y)}{\max\{\Gamma_n(\ell,x), 1\}}
\end{align}
As in the proof of Proposition \ref{prop: policy limits}, 
\begin{align}
    &\frac{1}{n}\sum_{j=0}^{n-1}I(X_j=x, A_j=\ell, X_{j+1}=y)\\
    =& \bigg\{\frac{1}{n}\sum_{j=0}^{n-1}I(X_j=x, A_j=\ell)[I(X_{j+1}=y)-P(\ell,x,y)]\bigg\}+\frac{1}{n}\Gamma_n(\ell,x)P(\ell,x,y)
\end{align}
Therefore,
\begin{align}
   \hat{P}_n(\ell,x,y)=&\frac{\bigg\{\frac{1}{n}\sum_{j=0}^{n-1}I(X_j=x, A_j=\ell)[I(X_{j+1}=y)-P(\ell,x,y)]\bigg\}}{\frac{1}{n}\max\{\Gamma_n(\ell,x),1\}} 
   \\ & +\frac{\Gamma_n(\ell,x)}{\max\{\Gamma_n(\ell,x),1\}}P(\ell,x,y)\\
    =& \frac{o_p(1)}{\frac{1}{n}\max\{\Gamma_n(\ell,x),1\}}+\frac{\Gamma_n(\ell,x)}{\max\{\Gamma_n(\ell,x),1\}}P(\ell,x,y) \qquad \text{from (\ref{eq: convergence P(l) martingale})}\\
    \xrightarrow{p}& P(\ell,x,y)\label{eq: convergence to P}
\end{align}
as $n\rightarrow \infty$, where the convergence in (\ref{eq: convergence to P}) holds %\delrj{by applying Slutsky and} \newrj{because} 
because $\gamma(\ell,x)$ are almost surely positive.

We now prove \eqref{eq:pi_convergence}.  
Let $\mu_n$ denote the law of $\hat{\pi}_n$, and view it as a probability measure on vectors in the probability simplex on the state space $S$, denoted $\Delta(S)$.  The set $\Delta(S)$ is compact, and so by Prohorov's theorem there exists a deterministic subsequence $n_k$ such that $\mu_{n_k}$ converges weakly to a probability measure $\mu$ on $\Delta(S)$, with associated random variable $\pi'(\ell)$.  Since $\hat{P}_n(\ell) \xrightarrow{p} P(\ell)$ by \eqref{eq: convergence to P}, and $P(\ell)$ is deterministic, it follows by Slutsky's theorem that:
\[ \hat{\pi}_{n_k}(\ell) \hat{P}_{n_k}(\ell) \Rightarrow \pi'(\ell) P(\ell). \]

Since the policy limits are almost surely positive, $J$ is almost surely finite.  Thus, for all sufficiently large $k$ there holds $\hat{\pi}_{n_k}(\ell) \hat{P}_{n_k}(\ell) = \hat{\pi}_{n_k}(\ell)$.   It follows that $\pi'(\ell) = \pi'(\ell) P(\ell)$, so that $\pi'(\ell) = \pi(\ell)$ almost surely.  In other words, the measure $\mu$ is the Dirac measure that places probability one on $\pi(\ell)$.  Since this is the case for every convergent subsequence of $\{\mu_n\}$, we conclude that $\hat{\pi}_n(\ell) \Rightarrow \pi(\ell)$.  Since $\pi(\ell)$ is deterministic, we conclude that $\pi_n(\ell) \xrightarrow{p} \pi(\ell)$ as $n \to \infty$, as required.
\end{proof}

\begin{proof}[Proof of Corollary \ref{cor:consistency}]
Since the policy limits of $A$ are almost surely positive, it is straightforward to show that for each $\ell,x$, $\hat{r}_n(\ell,x) \xrightarrow{p} r(\ell,x)$ as $n \to \infty$.  The result then follows from Proposition \ref{prop: P convergence}.
\end{proof}

\section{Proofs: Section \ref{sec:var}}
\label{sec:var_proofs}

\begin{proof}[Proof of Theorem \ref{thm:CLT}]
We begin by showing that for $\ell = 1,2$ and $n \geq J$, there holds:
\begin{equation}
    \big(\hat{\pi}_n(\ell)-\pi(\ell)\big)r(\ell)=\hat{\pi}_n(\ell)\big(\hat{P}_n(\ell)-P(\ell)\big)\tilde{g}(\ell).
\end{equation}
To see this, observe that for $n\ge J$,
\begin{align}
    \hat{\pi}_n(\ell)-\pi(\ell)=\hat{\pi}_n \hat{P}_n(\ell)-\hat{\pi}_n(\ell)P(\ell)+\hat{\pi}_n(\ell)P(\ell)-\pi(\ell)P(\ell)
\end{align}
so rearranging the terms we get,
\begin{align}
    \big(\hat{\pi}_n(\ell)-\pi(\ell)\big)\big(I-P(\ell)\big)=\hat{\pi}_n(\ell)\big(\hat{P}_n(\ell)-P(\ell)\big). 
\end{align}
Because $\Pi(\ell)$ has identical elements down each column, 
\begin{align}
   \big (\hat{\pi}_n(\ell)-\pi(\ell)\big)\Pi(\ell)=0,
\end{align}
and hence
\begin{align}
    \big(\hat{\pi}_n(\ell)-\pi(\ell)\big)\big(I-P(\ell)+\Pi(\ell)\big)=\hat{\pi}_n(\ell)\big(\hat{P}_n(\ell)-P(\ell)\big). 
\end{align}
Recall that we defined $\tilde{g}(\ell) = \big(I-P(\ell)+\Pi(\ell)\big)^{-1}r(\ell)$; thus
\begin{align}
\big(\hat{\pi}_n(\ell)-\pi(\ell)\big)r(\ell)=\hat{\pi}_n(\ell)\big(\hat{P}_n(\ell)-P(\ell)\big)\tilde{g}(\ell),
\end{align}
as desired.

Now for $\ell=1,2$, and $x\in S$, define
\begin{align}
    r(\ell,x,y)=&\int_{\mathbb{R}}zF(dz, x,y,\ell).
%    \sigma^2(\ell,x)=& \Var\bigg(\tilde{g}(\ell,X_j)+R_j\big|X_{j-1}=x, A_{j-1}=\ell\bigg)\\
%    =&\sum_{y\in S} P(\ell, x, y) [\tilde{g}(\ell,y)-\sum_{z\in S} P(\ell,x,z)\tilde{g}(\ell,z)]^2\\
%    &+\sum_{y\in S} P(\ell,x,y)\Var(R_1|X_0=x, X_1=y, A_0=\ell).
\end{align}

Recall that $\hat{\alpha}_n = \hat{\pi}_n(2) \hat{r}_n(2) - \hat{\pi}_n(1) \hat{r}_n(1)$. We can write:
\begin{align}
\hat{\pi}_n(\ell)&\hat{r}_n(\ell)-\pi(\ell)r(\ell)\\
    &=\big(\hat{\pi}_n(\ell)-\pi(\ell)\big)r(\ell)+\hat{\pi}_n(\ell)\big(\hat{r}_n(\ell)-r(\ell)\big)\\
    &=\hat{\pi}_n(\ell)\big(\hat{P}_n(\ell)-P(\ell)\big)\tilde{g}(\ell)+\hat{\pi}_n(\ell)\big(\hat{r}_n(\ell)-r(\ell)\big)\\
    &= \sum_{x\in S} \hat{\pi}_n(\ell,x) \frac{\sum_{j=1}^{n} I(X_{j-1}=x, A_{j-1}=\ell)\big[\tilde{g}(\ell,X_j)-\big(P(\ell)\tilde{g}(\ell)\big)(X_{j-1})]}{\max\{\Gamma_n(\ell,x),1\}}\nonumber\\&+ \sum_{x\in S} \hat{\pi}_n(\ell,x) \frac{\sum_{j=0}^{n-1} I(X_j=x, A_j=\ell)\big(R_{j+1}-r(\ell,x)\big)}{\max\{\Gamma_n(\ell,x),1\}}\\
%    =& \sum_{x\in S}\frac{\Lambda_n(\ell,x)}{\Gamma_n(\ell,x)}\hat{\pi}_n(\ell,x)+\sum_{x\in S}\frac{\beta_n(\ell,x)}{\Gamma_n(\ell,x)}\hat{\pi}_n(\ell,x)\\
%    =& \sum_{x\in S} \hat{\pi}_n(\ell,x)\bigg[\frac{\Lambda_n(\ell,x)+\beta_n(\ell,x)}{\Gamma_n(\ell,x)}\bigg]\\
    &= \sum_{x\in S}\hat{\pi}_n(\ell,x)\frac{\sum_{j=1}^{n}D_j(\ell,x)}{\max\{\Gamma_n(\ell,x),1\}}\label{eq:trans_to_Dj}
\end{align}
where 
\begin{align}
\label{eq:Dj(l,x)}
    D_j(\ell,x):= I(X_{j-1}=x, A_{j-1}=\ell)\big[\tilde{g}(\ell,X_j)-\big(P(\ell)\tilde{g}(\ell)\big)(X_{j-1})+R_j-r(\ell,X_{j-1})\big].
\end{align}
Note that for each $\ell,x$, $D_j(\ell,x)$ is a martingale difference sequence adapted to $\mathcal{G}_j$.

For deterministic $w(\ell)=(w(\ell,x): x\in S), \ell=1,2$, consider
\begin{align}
    T_n&= %\sum_{\ell=1}^{2} \sum_{x\in S}\frac{\Lambda_n(\ell,x)+\beta_n(\ell,x)}{\sqrt{n}}w(\ell,x)\\ &=
    \frac{1}{\sqrt{n}}\sum_{j=1}^{n} \sum_{\ell=1}^{2} \sum_{x\in S} D_j(\ell,x)w(\ell,x)\\
    &\triangleq \frac{1}{\sqrt{n}} \sum_{j=1}^{n} D_j,
 \end{align}
where
\begin{align}
    D_j=\sum_{\ell=1}^{2} \sum_{x\in S} D_{j}(\ell,x)w(\ell,x).
\end{align}
The $D_j$'s are martingale differences adapted to $(\mathcal{G}_j: j\ge 0)$. Since they are bounded by $2\max\{|\tilde{g}(\ell,x)|: x\in S, \ell=1,2\}<\infty$ (since $r(\ell,x)$ is finite), the following conditional Lindeberg's condition holds (Eq. (3.7) of \cite{hall1980martingale}):
\begin{align}
    \text{for all } \epsilon>0, \quad \sum_{j=1}^{n}\frac{1}{n}E\{D_j^2I(|D_j|> \epsilon)|\mathcal{G}_{j-1}\}\xrightarrow{p} 0.
\end{align}

Furthermore,
\begin{align}\label{eq: D j variance}
    \frac{1}{n} \sum_{j=1}^{n} E\{D_j^2|\mathcal{G}_{j-1}\}&=\frac{1}{n}\sum_{\ell=1}^{2}\sum_{x\in S} \sum_{j=0}^{n-1} I(X_j=x, A_j=\ell)\sigma^2(\ell,x)w^2(\ell,x)\\%Also, is it clear that the conditional variances are RHS?
    &=\sum_{\ell=1}^{2}\sum_{x\in S} \sigma^2(\ell,x)w^2(\ell,x)\frac{\Gamma_n(\ell,x)}{n}\\
    &\xrightarrow{p} \sum_{\ell=1}^{2} \sum_{x\in S} \sigma^2(\ell,x)w^2(\ell,x)\gamma(\ell,x)\triangleq\eta^2,
\end{align}
since $A$ is assumed to be a TAR policy. We therefore conclude that (by Corollary (3.1) of \cite{hall1980martingale}) 
\begin{align}
    T_n%&=\sum_{\ell=1}^{2} \sum_{x\in S} \frac{\big(\Lambda_n(\ell,x)+\beta_n(\ell,x)\big)w(\ell,x)}{\sqrt{n}}\\
    &\Rightarrow \sum_{\ell=1}^{2} \sum_{x\in S} \sigma(\ell,x)w(\ell,x)\sqrt{\gamma(\ell,x)}G(\ell,x)\quad (stably)
\end{align}
as $n\rightarrow \infty$.\footnote{If a sequence of random variables $Y_n$ on a probability space $(\Omega, \mathcal{F}, P)$ converges weakly to $Y$, we say the convergence is {\em stable} if for all continuity points $y$ of the cumulative distribution function of $Y$ and for all measurable events $E$, the limit $\lim_{n \to \infty} P(\{Y_n \leq y\} \cap E) = Q_y(E)$ exists, and if $Q_y(E) \to P(E)$ as $y \to \infty$.}

Stable weak convergence implies that the following convergence of characteristic functions holds as well:
\begin{align}
    &E\bigg\{\exp\big(iT_n 
%\sum_{j=1}^2 \sum_{x\in S}w(j,x)\frac{\Lambda_n(j,x)+\beta_n(j,x)}{\sqrt{n}}
+i\sum_{\ell=1}^2\sum_{x\in S}\tilde{w}(\ell,x)\gamma(\ell,x)\big)\bigg\}\\
    \rightarrow &E\bigg\{\exp\big(i\sum_{\ell=1}^{2}\sum_{x\in S} w(\ell,x)G(\ell,x)\sigma(\ell,x)\sqrt{\gamma(\ell,x)}+i\sum_{\ell=1}^{2} \sum_{x\in S}\tilde{w}(\ell,x)\gamma(\ell,x)\big)\bigg\}
\end{align}
as $n\rightarrow \infty$, for any deterministic choice of $\tilde{w}(\ell)=(\tilde{w}(\ell,x): x\in S), j=1,2$. The Cramer-Wold device therefore implies that
\begin{align}\label{eq: cramer wold weak convergence}
\left(\frac{\sum_{j = 1}^n D_j(\ell,x)}{\sqrt{n}} , \gamma(\ell,x): x\in S, \ell = 1,2\right)
%    &(\frac{\Lambda_n(j,x)+\beta_n(j,x)}{\sqrt{n}}, \gamma(j,x): x\in S, j=1,2)\\
    \Rightarrow \left(\sigma(\ell,x)\sqrt{\gamma(\ell,x)}G(\ell,x), \gamma(\ell,x): x\in S, \ell=1,2\right)
\end{align}
as $n\rightarrow \infty$. Consequently, since the $\gamma(\ell,x)$'s are almost surely  positive, 
\begin{align}
\left(\frac{\sum_{j= 1}^n D_j(\ell,x)}{\sqrt{n}\gamma(\ell,x)}: x\in S, \ell=1,2\right)\Rightarrow   \left(\frac{\sigma(\ell,x)G(\ell,x)}{\sqrt{\gamma(\ell,x)}}: x\in S, \ell=1,2\right)
\end{align}
as $n\rightarrow \infty$. Because $\frac{\Gamma_n(\ell,x)}{n\gamma(\ell,x)}\xrightarrow{p} 1$ as $n\rightarrow \infty$, Slutsky's lemma implies that 
\begin{align}
    &\sqrt{n}\left(\frac{\sum_{j = 1}^n D_j(\ell,x)}{\Gamma_n(\ell,x)}: x\in S, \ell=1,2\right)\\
    \Rightarrow& \bigg(\frac{\sigma(\ell,x)G(\ell,x)}{\sqrt{\gamma(\ell,x)}}: x\in S, \ell=1,2\bigg)\label{eq: Slutsky 1}
\end{align}
as $n\rightarrow \infty$. Finally, Result 2, (\ref{eq: Slutsky 1}), and another application of Slutsky's lemma imply that 
\begin{align}
    &\sqrt{n}\left[\sum_{x\in S} \hat{\pi}_n(1,x)\frac{\sum_{j = 1}^n D_j(1,x)}{\Gamma_n(1,x)}-\sum_{x\in S}\pi_{n}(2,x)\frac{\sum_{j = 1}^n D_j(2,x)}{\Gamma_n(2,x)}\right]\\
    \Rightarrow& \sum_{x\in S}\frac{\pi(1,x)\sigma(1,x)G(1,x)}{\sqrt{\gamma(1,x)}}-\sum_{x\in S}\frac{\pi(2,x)\sigma(2,x)G(2,x)}{\sqrt{\gamma(2,x)}}
\end{align}
as $n\rightarrow \infty$, proving the Theorem.
\end{proof}

\begin{proof}[Proof of Corollary \ref{cor:var_lowerbound}]
Note that the Skorohod representation theorem together with Fatou's lemma applied to \eqref{eq:CLT} yields the following:
\begin{equation}
\label{eq:fatou_var_lowerbound}
    \liminf_{n\rightarrow \infty} n \Var(\hat{\alpha}_n - \alpha)  \ge \sum_{\ell=1,2}\sum_{x\in S}\pi^2(\ell,x)\sigma^2(\ell,x) E\left\{\frac{1}{\gamma(\ell,x)}\right\}.
\end{equation}
Using Jensen's inequality on the right hand side of \eqref{eq:fatou_var_lowerbound}, we obtain the result in \eqref{eq:var_lowerbound}, as required.  (Note that $E\{\gamma(\ell,x)\} > 0$ for all $\ell,x$ since we assumed the policy limits are almost surely positive.)
\end{proof}

\begin{proof}[Proof of Corollary \ref{cor:asymptotic_var_constant_TAR}]
% {\em Step 1: The following lower bound holds:}
% \[  \liminf_{n \to \infty} n \Var(\hat{\alpha}_n - \alpha) \geq \sum_{\ell=1,2} \sum_{x\in S} \frac{\pi^2(\ell,x)\sigma^2(\ell,x)}{\gamma(\ell,x)}. \]
% The result follows from Corollary \ref{cor:var_lowerbound} since $\gamma(\ell,x)$ is almost surely constant.

First we show the following limits hold:
\begin{align}
\lim_{n \to \infty}& E \left\{ \sup_{\ell = 1,2; x \in S} \left| \frac{\hat{\pi}_n(\ell,x)}{\max\{\Gamma_n(\ell,x),1\}/n} - \frac{\pi(\ell,x)}{\gamma(\ell,x)} \right| \right\} = 0; \label{eq:pi_over_gamma_limit1}\\
\lim_{n \to \infty}& E \left\{ \sup_{\ell = 1,2; x \in S} \left| \frac{\hat{\pi}_n(\ell,x)}{\max\{\Gamma_n(\ell,x),1\}/n} - \frac{\pi(\ell,x)}{\gamma(\ell,x)} \right|^2 \right\} = 0.\label{eq:pi_over_gamma_limit2}
\end{align}
We know from Proposition \ref{prop: P convergence} that $\hat{\pi}_n \xrightarrow{p} \pi_n$ for all $\ell,x$.  Further, we know from the definition of policy limits that $\Gamma(\ell,x)/n \xrightarrow{p} \gamma(\ell,x)$ for all $\ell,x$.  Thus the vector $(\hat{\pi}_n, \Gamma_n/n)$ converges in probability to the vector $(\pi, \gamma)$.  Use the Skorohod representation theorem to construct a joint probability space on which these limits hold almost surely.  Then note that each of the terms inside the expectations are bounded in 
\eqref{eq:pi_over_gamma_limit1}-\eqref{eq:pi_over_gamma_limit2}, so the desired results hold by the bounded convergence theorem.

For the next steps, we use the same definitions as in the proof of Theorem \ref{thm:CLT}, and refer the reader there for the relevant notation.  In particular, we define $D_j(\ell,x)$ as in that proof, and use the relationship in \eqref{eq:trans_to_Dj}.  We make the following two definitions:
\begin{align*}
 Y_n(\ell) &= \hat{\pi}_n(\ell) \hat{r}_n(\ell) - \pi(\ell)r(\ell) = \sum_{j = 1}^n \sum_{x \in S} \frac{\hat{\pi}_n(\ell,x)}{\max\{\Gamma_n(\ell,x),1\}/n} \cdot \frac{D_j(\ell,x)}{n};\\
 Z_n(\ell) &= \sum_{j = 1}^n \sum_{x \in S} \frac{\pi(\ell,x)}{\gamma(\ell,x)} \cdot \frac{D_j(\ell,x)}{n}.
\end{align*}

Note that $\hat{\alpha}_n - \alpha = Y_n(2) - Y_n(1)$.  The main remaining step in our proof is to show that we can compute the scaled asymptotic variance of $Z_n(2) - Z_n(1)$, and to use this to upper bound the scaled asymptotic variance of $Y_n(2) - Y_n(1)$.

We now show the following limit holds:
\begin{equation}
\label{eq:Zn_var_limit}
\lim_{n \to \infty} \Var(\sqrt{n} (Z_n(2) - Z_n(1))) = \sum_{x \in S} \frac{\pi^2(\ell,x) \sigma^2(\ell,x)}{\gamma(\ell,x)}.
\end{equation}
Observe that $Z_n(\ell)$ is a weighted sum of martingale differences; thus we use orthogonality of martingale differences again.  In particular, $E\{Z_n(\ell)\} = 0$ for all $n$.  Thus $\Var(\sqrt{n} (Z_n(2) - Z_n(1))) = n E\{ (Z_n(2) - Z_n(1))^2 \}$.  Observe that:
\[ Z_n(1) Z_n(2) = \sum_{j = 1}^n \sum_{k = 1}^n \sum_{x,y \in S} \frac{\pi(1,x) \pi(2,y)}{\gamma(1,x)\gamma(2,y)} \frac{D_j(1,x)D_k(2,y)}{n}. \]
We show that $E\{D_j(1,x) D_k(2,y) \} = 0$.  If $j = k$, then the product $D_j(1,x) D_j(2,y) = 0$ since only one of the two chains $\ell = 1,2$ can be run at time $k$.  If $j > k$, then the tower property of conditional expectations is applied  as usual to give:
\[ E\{ E\{ D_j(1,x) | \mathcal{G}_k \} D_k(2,x) \} = 0. \]
The same holds of course if $j < k$.  Thus we have $E\{Z_n(1) Z_n(2)\} = 0$ for all $n$.  Finally, using \eqref{eq: D j variance} with $w(1,x) = \pi(1,x)/\gamma(1,x)$ and $w(2,x) = 0$, together with the Skorohod representation theorem and the bounded convergence theorem, it follows that:
\[ E \{ n Z_n(1)^2 \} \to \sum_{x \in S} \frac{\pi^2(1,x) \sigma^2(1,x)}{\gamma(1,x)}. \]
(Use of bounded convergence here requires assuming boundedness of rewards.)  An analogous result holds for the limit of $E \{ n Z_n(2)^2\}$.  Combining these steps, we obtain \eqref{eq:Zn_var_limit}.  

Finally, we can establish the following upper bound:
\begin{equation}
\label{eq:var_limsup}
\limsup_{n \to \infty} n \Var(\hat{\alpha}_n - \alpha_n) \leq \sum_{\ell = 1,2} \sum_{x \in S} \frac{\pi^2(\ell,x) \sigma^2(\ell,x)}{\gamma(\ell,x)}.
\end{equation}

To prove this we upper bound the variance of $Y_n(2) - Y_n(1)$ in terms of the variance of $Z_n(2) - Z_n(1)$.  Note that $\Var(Y_n(2) - Y_n(1)) \leq E\{ (Y_n(2) - Y_n(1))^2 \}$.  Further, because $D_j(\ell,x)$ are bounded, there exist constants $M_1,M_2$ such that:
\begin{multline*}
(Y_n(2) - Y_n(1))^2 \leq (Z_n(2) - Z_n(1))^2 +M_1 \sup_{\ell = 1,2; x \in S} \left| \frac{\hat{\pi}_n(\ell,x)}{\max\{\Gamma_n(\ell,x),1\}/n} - \frac{\pi(\ell,x)}{\gamma(\ell,x)} \right| \\
+  M_2 \sup_{\ell = 1,2; x \in S} \left| \frac{\hat{\pi}_n(\ell,x)}{\max\{\Gamma_n(\ell,x),1\}/n} - \frac{\pi(\ell,x)}{\gamma(\ell,x)} \right|^2.
\end{multline*}
Taking expectations on both sides, and applying Steps 2 and 3, establishes \eqref{eq:var_limsup}.  Combining \eqref{eq:var_limsup} with \eqref{eq:var_lowerbound} yields the desired result (note that $E\{\gamma(\ell,x)\} = \gamma(\ell,x)$ since the policy limits are almost surely constant).
\end{proof}

\begin{proof}[Proof of Theorem \ref{thm: stationary optimality}]
First, we show that \eqref{eq:MLE_opt1}-\eqref{eq:MLE_opt2} has a unique optimal solution $\kappa^*$, with entries that are all positive.  It is straightforward to see that the solution to this problem will be  positive in all coordinates, since the objective function approaches infinity as any $\kappa(\ell,x)$ approaches zero (as all $\sigma(\ell,x)$ are positive).  Further, note that the objective function is strictly convex and $\mathcal{K}$ is convex and compact, and thus there must be a unique solution $\kappa^* \in \mathcal{K}$ to the optimization problem \eqref{eq:MLE_opt1}-\eqref{eq:MLE_opt2}.

Next, we show that the limit inferior of the scaled asymptotic variance of the MLE under any TAR policy with positive policy limits is bounded below by the optimal value of \eqref{eq:MLE_opt1}-\eqref{eq:MLE_opt2}.  This follows by applying Corollary \ref{cor:var_lowerbound}.  In particular, from Remark \ref{rem:markov_TAR}, we know $\gamma$ is a probability measure over the set $\mathcal{K}$ (cf.~Definition \ref{def: mathcal K}).  The set $\mathcal{K}$ is convex and compact, and so $\kappa=E\{\gamma\}\in \mathcal{K}$. In particular, as a consequence by applying \eqref{eq:var_lowerbound} we conclude that the optimal value of \eqref{eq:MLE_opt1}-\eqref{eq:MLE_opt2} is a lower bound to $\liminf_{n \to \infty} \Var(\hat{\alpha}_n - \alpha)$.  

Finally, the fact that \eqref{eq:Astar_var} holds follows from Corollary \ref{cor:asymptotic_var_constant_TAR}.  The stationary Markov policy $A^*$ defined via \eqref{eq:Markov_from_kappa} has the constant policy limit $\kappa^*$ (cf.~Remark \ref{rem:markov_TAR}), so it is efficient.  The theorem follows.
\end{proof}

\section{Pseudocode for \OnlineETI}
\label{sec:online_pseudocode}

The pseudocode for \OnlineETI is prsented as Algorithm \ref{alg: reg cycle}.

\begin{algorithm}[ht]
\caption{\OnlineETI (Online Experimentation with Temporal Interference)}\label{alg: reg cycle}
\begin{algorithmic}[1]

\Procedure{Experiment}{initial state $x_0$}
\State Set initial state $X_0 = x_0$
\State Initialization: For $\ell = 1,2$, $x,y \in S$, set ${\hat{P}}_0(\ell,x,y)=\frac{1}{|S|}$; $\Gamma_0(\ell,x)=0$; $\Phi_0(\ell,x,y)=0$;
\State \indent $\Theta_0(\ell,x)=0$; $\Psi_0(\ell,x)=0$; $\Upsilon_0(\ell,x,y)=0$; $\hat{r}_0(\ell,x)=0$; $\hat{s}_0(\ell,x,y)=0$;
\State \indent $\hat{t}_0(\ell,x,y)=0$; $\hat{\pi}_0(\ell, x)=0$; $\hat{p}_0(\ell,x)=0.5$
\For{$n = 1, 2, \ldots$}
\State Set $A_{n-1} = \ell$ with probability $\hat{p}_{n-1}(\ell,x)$, i.e.:
\State \indent $A_{n-1} = 1$ if $U_{n-1} \leq \hat{p}_{n-1}(1,x)$, and $A_{n-1} = 2$ otherwise
\State Run chain $A_{n-1}$, and obtain reward $R_n$ and new state $X_n$
%Sample $n$ using Markov policy $\hat{p}_n(\ell, x)$, record action $A_n$, reward $R_n$, and new state $X_{n}$.
\State For all $\ell = 1,2$, $x,y \in S$:
\State \indent $\Gamma_n(\ell,x)\gets \Gamma_{n-1}(\ell,x)+I(X_{n-1}=x, A_{n-1}=\ell)$
\State \indent $\Phi_n(\ell,x, y)\gets  \Phi_{n-1}(\ell,x, y)+I(X_{n-1}=x, X_{n}=y, A_{n-1}=\ell)$ 
\State \indent $\Theta_n(\ell,x)\gets \Theta_{n-1}(\ell,x)+I(X_{n-1}=x, A_{n-1}=\ell)R_{n}$ 
\State \indent $\Psi_n(\ell,x,y)=\Psi_{n-1}(\ell,x,y)+I(X_{n-1}=x, X_{n}=y, A_{n-1}=\ell)R_{n}$ 
\State \indent $\Upsilon_n(\ell,x,y)\gets \Upsilon_{n-1}(\ell,x,y)+I(X_{n-1}=x, X_{n}=y, A_{n-1}=\ell)R_{n}^2$ 
\State \indent $\hat{P}_n(\ell,x,y)\gets \frac{\Phi_n(\ell,x,y)}{\max\{\Gamma_n(\ell,x), 1\}}$
\If{for both $\ell = 1,2$, $\hat{P}_n(\ell)$ is irreducible}
\State Set $\hat{\pi}_n(\ell)$ to be the unique steady state distribution of $\hat{P}_n(\ell)$
\State For $\ell =1,2$ and $x,y \in S$:
\State \indent $\hat{\Pi}_n(\ell)\gets e\hat{\pi}_n(\ell)$
\State \indent $\hat{\tilde{g}}_n(\ell,x)\gets \big(I-\hat{P}_n(\ell) + \hat{\Pi}_n(\ell)\big)^{-1}\hat{r}_n(\ell)$
\State \indent$\hat{r}_n(\ell,x)\gets \frac{\sum_{y\in S}\Psi_n(\ell,x,y)}{\max\{\Gamma_n(\ell,x), 1\}}$
\State \indent$\hat{s}_n(\ell,x,y)\gets \frac{\Psi_n(\ell,x,y)}{\max\{\Phi_n(\ell,x, y), 1\}}$
\State \indent$\hat{t}_n(\ell,x,y)\gets \frac{\Upsilon_n(\ell,x,y)}{\max\{\Phi_n(\ell,x, y), 1\}}$
\State \indent$\hat{\sigma}_n^2(\ell,x)\gets \sum_{y\in S} \hat{P}_n(\ell,x,y)[\hat{\tilde{g}}_n(\ell,y)$ $-$ $\sum_{z\in S}\hat{P}_n(\ell,x,z)\hat{\tilde{g}}_n(\ell,z)]^2$
\State \indent\ \  $+$ $\sum_{y\in S}\hat{P}_n(\ell,x,y)\big(\hat{t}_n(\ell,x,y)-\hat{s}_n(\ell,x,y)^2\big)$
\State Choose any $\hat{\kappa}_n$ in $\arg\inf_{\hat{\kappa}\in \mathcal{K}}\sum_{\ell=1}^{2} \sum_{x\in S} \frac{\hat{\pi}_n^2(\ell,x)\hat{\sigma}_n^2(\ell,x)}{\hat{\kappa}_n(\ell,x)}$
\State For all $x \in S$, $M_n(x) \gets \Gamma_n(1,x) + \Gamma_n(2,x)$
\If{$\hat{\kappa}_n(1,x) + \hat{\kappa}_n(2,x) > 0$ and $M_n(x) > 0$}
\State $\hat{p}_n(\ell,x) \gets (1-M_n(x)^{-1/2})\left(\frac{\hat{\kappa}_n(\ell,x)}{\hat{\kappa}_n(1,x)+\hat{\kappa}_n(2,x)}\right)$
\State \indent $+\frac{1}{2}M_n(x)^{-1/2}$ for $\ell = 1,2$, $x \in S$
\Else
\State $\hat{p}_n(\ell,x) = 0.5$ for $\ell = 1,2$, $x \in S$
\EndIf
\State $\hat{\alpha}_n \gets \hat{\pi}_n(2)\hat{r}_n(2)-\hat{\pi}_n(1)\hat{r}_n(1)$
\Else
\State $\hat{p}_n(\ell,x) \gets 0.5$
\State $\hat{\alpha}_n \gets 0$
\EndIf
\EndFor
%\Return{$\hat{\alpha}_N$}
\EndProcedure
\end{algorithmic}
\end{algorithm}

\section{Proofs: Section \ref{sec:online}}
\label{sec:proofs_online}

\begin{proof}[Proof of Theorem \ref{thm: alg efficiency}]
%From Theorem \ref{thm:CLT} and \ref{thm: stationary optimality}, it is sufficient to prove that for the proposed experiment \mr{This sufficiency argument requires the convergence of the limit variance of random variables to variance of the limit random variable provided in previous section.} 
We establish that for \OnlineETI there holds:
\begin{align}
    \frac{1}{n} \Gamma_n(\ell,x)\xrightarrow{p} \kappa^*(\ell, x),
\end{align}
where $\kappa^*$ is the solution to \eqref{eq:MLE_opt1}-\eqref{eq:MLE_opt2}.

First, note that the forced exploration (i.e., the $M_n(x)^{-1/2}$ term in the definition of $\hat{p}_n(\ell,x)$) ensures that $\Gamma_n(\ell,x) \to \infty$ almost surely for all $\ell,x$.  To see this, note first that as long as $M_n(x) \to \infty$ almost surely, it must be the case that $\Gamma_n(\ell,x) \to \infty$ for $\ell = 1,2$ almost surely as well, due to the forced exploration term, the fact that $\sum_{k \geq 1} k^{-1/2}$ diverges, and the Borel-Cantelli Lemma.  Since the state space is finite, almost surely, there exists at least one state $x'$ that is visited infinitely often.  Thus almost surely, all states reachable from $x'$ in one step under either $P(1)$ or $P(2)$ must be visited infinitely often as well.  The same argument applies to those states, and so on.  Since the state space is finite, and both $P(1)$ and $P(2)$ are irreducible, this process exhausts all the states, and we conclude $M_n(x) \to \infty$ almost surely for all $x \in S$.

Next we show that for all $\ell,x,y$, $\hat{P}_n(\ell,x,y)$ converges to $P(\ell,x,y)$ almost surely.  For each $\ell,x$, it is convenient to define $T_m(\ell,x) = \inf \{ n : \Gamma_n(\ell,x) = m \}$.  By the standard strong law of large numbers, it follows that $\hat{P}_{T_m(\ell,x)}(\ell,x,y)  \to P(\ell,x,y)$ almost surely; 
this is because $\hat{P}_{T_m(\ell,x)}(\ell,x,y)$ is the sample average of $m$ independent Bernoulli random variables, each with success probability $P(\ell,x,y)$.  Now observe that for $n$ such that $T_m(\ell,x) \leq n < T_{m+1}(\ell,x)$, $\hat{P}_n(\ell,x,y) = \hat{P}_{T_m(\ell,x)}(\ell,x,y)$; i.e., between successive visits to state $x$ in which policy $\ell$ is sampled, $\hat{P}_n(\ell,x,y)$ remains constant.  It follows therefore that $\hat{P}_n(\ell,x,y) \to P(\ell,x,y)$ almost surely as well.

We now use a compactness argument analogous to that used to establish \eqref{eq:pi_convergence} to show that $\hat{\pi}_n(\ell) \to \pi(\ell)$ almost surely.  Let $J$ be the first $n$ at which $\hat{P}_n(\ell)$ is irreducible for both $\ell = 1,2$.  The time $J$ is almost surely finite, because both chains are sampled with equal probability until time $J$, and because $P(\ell)$ is irreducible for $\ell = 1,2$.  Thus for the remainder of our argument, we condition on the almost sure event $J < \infty$.  Next, consider any subsequence $\{n_k\}$ along which, almost surely, $\hat{\pi}_{n_k}(\ell) \to \pi'(\ell)$.  (Note that in general, this is a random subsequence.)  Since $\hat{\pi}_{n_k}(\ell) \hat{P}_{n_k}(\ell) = \hat{\pi}_{n_k}(\ell)$ for all $k$, almost sure convergence of $\hat{P}_n(\ell)$ implies that $\pi'(\ell) P(\ell) = \pi'(\ell)$.  Thus $\pi'(\ell) = \pi(\ell)$ almost surely.  Since this is almost surely true for every convergent subsequence, we conclude that $\hat{\pi}_n(\ell) \to \pi(\ell)$ almost surely, as required.

Because rewards are bounded, and thus in particular have finite moments, an argument analogous to that above for $\hat{P}_n$ establishes that almost surely we have:
\[ \hat{r}_n(\ell,x) \to r(\ell,x) \]
and 
\[ \hat{t}_n(\ell,x,y) - \hat{s}_n^2(\ell,x,y)^2 \to \Var(R_1 | A_0 = \ell, X_0 = x, X_1 = y). \]

When $J < \infty$, since each $\hat{P}_n(\ell)$ is irreducible, it follows that $\big(I-\hat{P}_n(\ell) + \hat{\Pi}_n(\ell)\big)^{-1}$ exists.  By continuity, conditioning on $J < \infty$, we have:
\[ \hat{\tilde{g}}(\ell,x) \to \tilde{g}(\ell,x) \]
almost surely as well, and thus:
\[ \hat{\sigma}^2(\ell,x) \to \sigma^2(\ell,x) \]
almost surely.

We now establish almost sure convergence of $\hat{\kappa}_n$ to $\kappa^*$.  To do this, for a distribution $\tilde{\pi}$ on the state space $S$ and a nonnegative vector $\tilde{\sigma}$, define the correspondence $K(\tilde{\pi},\tilde{\sigma})$ to be the set of minimizers of $\sum_{\ell = 1,2} \sum_{x \in S} \tilde{\pi}^2(\ell,x)\tilde{\sigma}^2(\ell,x)/\kappa(\ell,x)$ over $\kappa \in \mathcal{K}$; recall that $\mathcal{K}$ is compact so this correspondence is nonempty everywhere.  Further, observe that if $\tilde{\pi}$ and $\tilde{\sigma}$ are positive in all coordinates, then the minimizer is unique, i.e., $K$ is a function.  
Then by Lemma \ref{lem:maximizer} below, $K$ is continuous in $\tilde{\pi}$ and $\tilde{\sigma}$ when they are both positive in all coordinates.  Since $\hat{\pi}_n(\ell) \to \pi(\ell)$ and $\hat{\sigma}_n^2(\ell,x) \to \sigma^2(\ell,x)$ almost surely, and both limits are positive in all coordinates, it follows that $K(\hat{\pi}_n, \hat{\sigma}_n) \to K(\pi, \sigma) = \kappa^*$ almost surely, and thus $\hat{\kappa}_n \to \kappa^*$ almost surely.

In particular, we thus know that almost surely, $\hat{\kappa}_n(\ell,x) > 0$ for all sufficiently large $n$.  As a result, it follows that $\hat{p}_n(\ell,x) \to p^*(\ell,x)$ almost surely, where:
\[ p^*(\ell,x) = \frac{\kappa^*(\ell,x)}{\kappa^*(1,x) + \kappa^*(2,x)}. \]

To complete the proof, we require some additional notation.  We define the following stochastic matrix:
\[ Q(x,y) = p^*(1,x) P(1,x,y) + p^*(2,x) P(2,x,y). \]
Note that this matrix is irreducible, and because $\kappa^* \in \mathcal{K}$, we can easily see that $Q$ has the unique stationary distribution given by:
\[ \zeta^*(x) = \kappa^*(1,x) + \kappa^*(2,x). \]
(See also the discussion in Remark \ref{rem:markov_TAR}.)

In addition, we define:
\[ \hat{Q}_n(x,y) = \frac{\sum_{j = 1}^n I(X_{j-1} = x, X_j = y)}{\max\{M_n(x),1\}}. \]
Observe that $\hat{Q}_n$ is a stochastic matrix.  

We now show that $\hat{Q}_n \xrightarrow{p} Q$.  We rewrite $\hat{Q}_n(x,y)$ as follows:
\begin{equation}
\label{eq:hatQrewrite}
 \hat{Q}_n(x,y) = \sum_{\ell = 1,2} \hat{P}_{n}(\ell,x,y) \cdot \frac{\sum_{j = 1}^n I(X_{j-1} = x, A_{j-1} = \ell)}{\max\{M_n(x),1\}}.
\end{equation}

For each $x$ and $m$, let $S_m(x) = \inf \{ n \geq 0 : M_n(x) = m \}$; this is the time step at which the $m$'th visit to $x$ takes place.  Further, define $\tilde{A}_m = A_{S_m(x)}$; this is the policy sampled at the $m$'th visit to $x$.  Let $\mathcal{H}_m(x) = \sigma((X_j, U_j, V_j, j < S_m(x); X_{S_m(x)}))$ be the sigma field generated by randomness up to the $m$'th visit to $x$, but prior to the policy being chosen.  Finally, let $\hat{q}_m(\ell,x) = \hat{p}_{S_m(x)}(\ell,x)$.  Now observe that when $M_n(x) = m \geq 1$, we have:
\begin{align*}
 \frac{\sum_{j = 1}^n I(X_{j-1} = x, A_{j-1} = \ell)}{\max\{M_n(x),1\}} &= \frac{\sum_{i = 1}^m I(\tilde{A}_i = \ell)}{m}\\
 &=  \frac{\sum_{i = 1}^m I(\tilde{A}_i = \ell) - \hat{q}_i(\ell,x)}{m} + \frac{\sum_{i = 1}^m \hat{q}_i(\ell,x)}{m}.
\end{align*}
 The terms in the first sum on the right hand side of the previous expression form a martingale difference sequence adapted to $\mathcal{H}_i$.  Thus using orthogonality of martingale differences, we have:
 \[ \frac{1}{m^2}E\left \{ \left( \sum_{i = 1}^m I(\tilde{A}_i = \ell) - \hat{q}_i(\ell,x)\right)^2 \right \} \leq \frac{1}{4m}, \]
 which approaches zero as $m \to \infty$.  By Chebyshev's inequality, it follows that:
 \[ \frac{\sum_{i = 1}^m I(\tilde{A}_i = \ell) - \hat{q}_i(\ell,x)}{m} \xrightarrow{p} 0 \]
 as $m \to \infty$.  On the other hand, note that since $M_n(x) \to \infty$ almost surely, we also know that $S_m(x) \to \infty$ as $m \to \infty$ almost surely.  Thus it follows that:
 \[ \frac{\sum_{i = 1}^m \hat{q}_i(\ell,x)}{m} \to p^*(\ell,x) \]
almost surely as $m \to \infty$, and thus in probability as well.  Combining these insights, we conclude that:
\[  \frac{\sum_{j = 1}^n I(X_{j-1} = x, A_{j-1} = \ell)}{\max\{M_n(x),1\}} \xrightarrow{p} p^*(\ell,x) \]
as $n \to \infty$, and so returning to \eqref{eq:hatQrewrite}, we find that:
\[ \hat{Q}_n(x,y) \xrightarrow{p} \sum_{\ell=1,2} p^*(\ell,x) P(\ell,x,y)  = Q(x,y). \]

Next, observe that:
\begin{align*}
\frac{M_n(x)}{n} &= \frac{\sum_{j = 1}^n I(X_j = x)}{n} + \frac{I(X_0 = x) - I(X_n = x)}{n} \\
&= \left( \sum_{y \in S} \hat{Q}_n(x,y) \cdot \frac{\max\{M_n(y),1\}}{n} \right) + O_p\left(\frac{1}{n}\right).
\end{align*}

Since $M_n(x) \to \infty$ almost surely, in what follows we condition on $M_n(x) \geq 1$ for all $x$ and thus ignore the ``$\max$'' on the right hand side in the preceding expression.  Note that for all $n$, $\sum_{x \in S} M_n(x) = n$.  Thus using a compactness argument analogous to that used to establish \eqref{eq:pi_convergence}, it follows that:
\[ \frac{M_n(n)}{n} \xrightarrow{p} \zeta^*(x). \]

We can now complete the proof of the theorem.  We have:
\begin{align}
    \frac{1}{n} \Gamma_n(\ell,x)=& \frac{1}{n} \sum_{j=0}^{n-1} I(X_j=x, A_j=\ell) \nonumber\\
    =& \frac{1}{n} \sum_{j=0}^{n-1} I(X_j=x) p^*(\ell,x)+\frac{1}{n} \sum_{j=0}^{n-1} I(X_j=x) \big(\hat{p}_j(\ell,x)-p^*(\ell,x)\big)\nonumber\\
    &+\frac{1}{n} \sum_{j=0}^{n-1} I(X_j=x)\big(I(A_j=\ell)-\hat{p}_j(\ell,x)\big)\label{eq: Gamma for algorithm conv}
\end{align}
Because $I(X_j=x)\big(I(A_j=\ell)-\hat{p}_j(\ell, x)\big)$ is a martingale difference measurable with respect to $\mathcal{G}_j$, orthogonality of martingale differences implies that 
\begin{align}
    E\bigg\{\bigg(&\frac{1}{n}\sum_{j=1}^{n} I(X_j=x)\big(I(A_j=\ell)-\hat{p}_j(\ell,x)\big)\bigg)^2\bigg\}\\
    &\leq  E\bigg\{\frac{1}{4} \cdot \frac{1}{n^2} \Gamma_{n}(\ell,x) \bigg\} \leq \frac{1}{4n}\\
    &\rightarrow  0
\end{align}
as $n\rightarrow \infty$. Therefore, by Chebyshev's inequality 
\begin{align}\label{eq: convergence P(l) martingale 2}
    \frac{1}{n} \sum_{j=0}^{n-1} I(X_j=x)\big(I(A_j=\ell)-\hat{p}_j(
\ell,x)\big)\xrightarrow{p} 0
\end{align}
as $n\rightarrow \infty$. Also, since $\hat{p}_n(\ell,x) \to p^*(\ell,x)$ almost surely, we have: 
\begin{align}\label{eq: constant c 1 x}
    \frac{1}{n} \sum_{j=0}^{n-1} I(X_j=x) \big(\hat{p}_j(\ell,x)-p^*(\ell,x)\big)\xrightarrow{p} 0.
\end{align}
Finally,
\[ \frac{1}{n} \sum_{j=0}^{n-1} I(X_j=x) p^*(\ell,x) = \frac{p^*(\ell,x) M_n(\ell,x)}{n} \xrightarrow{p} p^*(\ell,x) \zeta^*(\ell,x). \]
Combining the preceding results, we conclude that as $n \to \infty$ in (\ref{eq: Gamma for algorithm conv}), we have
\begin{equation}\label{eq: Gamma n with constant c}
    \frac{1}{n} \Gamma_n(\ell,x)\xrightarrow{p} \zeta^*(\ell,x)p^*(\ell,x) = \kappa^*(\ell,x)
\end{equation}
as $n\rightarrow \infty$, completing the proof of the theorem.
\end{proof}

\begin{lemma}
\label{lem:maximizer}
Suppose that the set $X$ is compact, the set $\Theta$ is open, and the real-valued function $f(\theta,x)$ is  continuous on the domain $\Theta \times X$.  Suppose further that for every $\theta \in \Theta$, there exists a unique $x^*(\theta) = \arg \min_{x \in X} f(\theta,x)$.  Then $x^*(\theta)$ is continuous in $\theta$.
\end{lemma}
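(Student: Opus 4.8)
The plan is to prove \emph{sequential} continuity of $x^*$ at an arbitrary point $\theta \in \Theta$; since the relevant spaces are subsets of Euclidean space (hence metric), this is equivalent to continuity. This is essentially a special case of Berge's maximum theorem, but a direct argument is short. Fix $\theta \in \Theta$ and a sequence $\theta_k \to \theta$ with $\theta_k \in \Theta$ (using openness of $\Theta$, this holds for all large $k$, so $x^*(\theta_k)$ is well-defined); the goal is $x^*(\theta_k) \to x^*(\theta)$. The first step is to reduce, via compactness of $X$, to a subsequence statement: a sequence in a metric space converges to a point $L$ iff every subsequence has a further subsequence converging to $L$. So it suffices to show that every subsequence of $(x^*(\theta_k))$ has a sub-subsequence converging to $x^*(\theta)$.

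The second, and central, step is the identification of limit points. Pass to an arbitrary subsequence; by compactness of $X$ extract a further subsequence along which $x^*(\theta_{k_j}) \to \bar x$ for some $\bar x \in X$. Now argue $\bar x = x^*(\theta)$ by comparing minimum values. On one hand, joint continuity of $f$ on $\Theta \times X$ and $(\theta_{k_j}, x^*(\theta_{k_j})) \to (\theta, \bar x)$ give $f(\theta_{k_j}, x^*(\theta_{k_j})) \to f(\theta, \bar x)$. On the other hand, for each fixed $y \in X$, optimality of $x^*(\theta_{k_j})$ gives $f(\theta_{k_j}, x^*(\theta_{k_j})) \le f(\theta_{k_j}, y)$, and letting $j \to \infty$ and using continuity of $f(\cdot, y)$ yields $f(\theta, \bar x) \le f(\theta, y)$. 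Since $y \in X$ was arbitrary, $\bar x$ is a minimizer of $f(\theta, \cdot)$ over $X$, and by the assumed uniqueness of the minimizer, $\bar x = x^*(\theta)$.

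Combining the two steps: every subsequence of $(x^*(\theta_k))$ has a further subsequence converging to $x^*(\theta)$, hence $x^*(\theta_k) \to x^*(\theta)$; as $\theta$ was arbitrary, $x^*$ is continuous on $\Theta$. I do not anticipate any real obstacle here. The only point requiring care is conceptual rather than technical: one cannot pass to the limit directly in ``the minimizer,'' only in ``the minimum value,'' so the argument must route through compactness (to produce \emph{some} limit point $\bar x$ of the minimizers) and through uniqueness (to pin $\bar x$ down to $x^*(\theta)$). Openness of $\Theta$ plays only the minor role of ensuring $\theta_k \in \Theta$ eventually so that all the objects involved are defined.
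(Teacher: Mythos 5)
Your proof is correct and follows essentially the same route as the paper's: extract a convergent subsequence of minimizers by compactness of $X$, pass to the limit in the optimality inequality using joint continuity, and invoke uniqueness of $x^*(\theta)$ to identify the limit, concluding convergence of the full sequence via the subsequence argument. The only cosmetic difference is that you compare $f(\theta_{k_j}, x^*(\theta_{k_j}))$ against $f(\theta_{k_j}, y)$ for all $y \in X$, while the paper compares only against $y = x^*(\theta)$; both suffice.
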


\begin{proof}
Suppose that $\theta^{(n)} \to \theta$.  For all $n$ we have:
\begin{equation}
    \label{eq:f_lowerbound}
    f(\theta^{(n)}, x^*(\theta^{(n)})) \leq f(\theta^{(n)}, x^*(\theta)).
\end{equation}
Since $X$ is compact, let $\{n_k\}$ be a subsequence such that $x^*(\theta^{(n_k)}) \to x'$ as $k \to \infty$.   Taking limits on both sides of \eqref{eq:f_lowerbound} along the sequence $\{n_k\}$, we obtain:
\[ f(\theta, x') \leq f(\theta, x^*(\theta)). \]
Since $x^*(\theta)$ is unique, this is only possible if $x' = x^*(\theta)$.  Since every convergent subsequence must have the limit $x'$, we conclude that $x^*(\theta^{(n)}) \to x^*(\theta)$ as $n \to \infty$, as required.
\end{proof}

\end{document}